\def\cX{\mathcal{X}}
\def\cY{\mathcal{Y}}
\def\cP{\mathcal{P}}
\def\cV{\mathcal{V}}
\def\cS{\mathcal{S}}
\def\cF{\mathcal{F}}
\def\cE{\mathcal{E}}
\newtheorem{theorem}{Theorem}
\newtheorem{lemma}{Lemma}
\newcommand{\eqdef}{\stackrel{\scriptscriptstyle \triangle}{=}}
\newcommand{\EE}{\mathbb{E}}
\title{Variable-length codes for channels with memory and feedback: error-exponent upper bounds}
\author{Achilleas Anastasopoulos and Jui Wu% <-this % stops a space
	\thanks{The authors are with the Department
	of Electrical Engineering and Computer Science, University of Michigan, Ann
	Arbor, MI, 48105 USA e-mail: { \{juiwu, anastas\}@umich.edu}}% <-this % stops a space
	}
\begin{document}
\maketitle
\begin{abstract}
The reliability function of memoryless channels with noiseless feedback and variable-length coding
has been found to be a linear function of the average rate in the classic work of Burnashev.
In this work we consider unifilar channels with noiseless feedback and study upper bounds for the channel reliability function with variable length codes. In unifilar channels the channel state is known to the transmitter but is unknown to the receiver.
We generalize Burnashev's analysis and derive a similar expression which is linear in average rate and depends on the channel capacity, as well as an additional parameter which relates to a sequential binary hypothesis testing problem over this channel. This parameter is evaluated by setting up an appropriate Markov decision process (MDP). Furthermore, an upper bound for this parameter is derived using a simplified MDP.
Numerical evaluation of the parameter for several binary input/state/output unifilar channels hints at the optimal transmission strategies. Such strategies are studied in a companion paper to provide lower (achievable) bounds on the channel reliability function.
\end{abstract}

%%%%%%%%%%%%%%%%%%%%%%%%%%%%%%%%%%%%%%%%%%%%%%%%%%%%%%%%%%
%%%%%%%%%%%%%%%%%%%%%%%%%%%%%%%%%%%%%%%%%%%%%%%%%%%%%%%%%%
%%%%%%%%%%%%%%%%%%%%%%%%%%%%%%%%%%%%%%%%%%%%%%%%%%%%%%%%%%
\section{Introduction}
\label{sec:intro}

Error exponent analysis has been an active area of research for quite a few decades.
The vast literature in this area can be categorized based on (i) whether the channel is memoryless  or with memory; (ii) whether there is or is not channel output feedback to the transmitter; (iii) whether the employed coding is fixed-length or variable-length; and (iv) whether upper (converse) or lower (achievable) bounds are analyzed.

In the case of memoryless channels with noiseless feedabck Schalkwijk and Kailath~\cite{ScKa66} proposed a transmission scheme for the additive white Gaussian noise (AWGN) channel with infinite error exponent.
On the other hand, Dobrushin~\cite{Do62} and later Haroutunian~\cite{Ha77}, by deriving an error upper bound for discrete memoryless channels (DMCs) showed that at least for symmetric channels there is no gain to be expected through feedback when fixed-length codes are employed.
This was a strong negative result since it suggested that for DMC channels, noiseless feedback can neither improve capacity (as was well known) nor can it improve the error exponent when fixed-length codes are used.
A remarkable result was derived by Burnashev in~\cite{Bu76}, where error exponent matching upper and lower bounds were derived for DMCs with feedback and  variable-length codes. The error exponent has a simple form $E(\overline{R})=C_1(1-\overline{R}/C)$, where $\overline{R}$ is the average rate, $C$ is the channel capacity and $C_1$ is the maximum divergence that can be obtained in the channel for a binary hypothesis testing problem.
Berlin et al~\cite{BeNaRiTe09} have provided a simpler derivation of the Burnashev bound that emphasizes the link between the constant $C_1$ and the binary hypothesis testing problem.
Several variable-length transmission schemes have been proposed in the literature for DMCs and their error exponents have been analyzed~\cite{Ho63, YaIt79, ShFe11}.

In the case of channels with memory and feedback, the capacity was studied in~\cite{TaMi09, PeCuVaWe08, BaAn10}, and
a number of capacity-achievable schemes have been recently studied in the literature~\cite{CoYuTa09, BaAn10, An12a}.
The only work that studies error exponents for variable-length codes for channels with memory and feedback is~\cite{CoYuTa09} where the authors consider finite state channels with channel state known causally to both the transmitter and the receiver.

In this work, we consider channels with memory and feedback, and derive a straight-line upper bound on the error-exponent for variable-length codes. We specifically look at unifilar channels since for this family, the capacity has been characterized in an elegant way through the use of Markov decision processes (MDPs)~\cite{PeCuVaWe08}.
Our technique is motivated by that of~\cite{Bu76}, i.e., studying the rate of decay of the posterior message entropy using martingale theory in two distinct regimes: large and small message entropies.
A major difference between this work as compared to~\cite{Bu76} is that we analyze the multi-step drift behavior of the communication system instead of the one-step drift that is analyzed for DMCs. This is necessitated by the fact that one-step analysis cannot capture the memory inherent in the channel and thus results in extremely loose bounds. It is not surprising that the parameter $C_1$ in our case also relates to the maximum discrimination that can be achieved in this channel in a  binary hypothesis testing problem. In order to evaluate this quantity, we formulate two MDPs with decreasing degree of complexity, the solutions of which are upper bounds on the quantity $C_1$, with the former being tighter than the latter. The tightness of the bounds is argued based on the fact that asymptotically this is the expected performance of the best system, and by achievability results presented in the companion paper~\cite{AnWu17}. An additional contribution of this work is a complete reworking of some of the more opaque proofs of~\cite{Bu76} resulting in significant simplification of the exposition. We finally provide some numerical results for a number of interesting unifilar channels including the trapdoor, chemical, and other two-input/output/state unifilar channels.
The main difference between our work and that in~\cite{CoYuTa09} is that for unfilar channels, the channel state is not observed at the receiver. This complicates the analysis considerably as is evidenced by the different approaches in evaluating the constant $C_1$ in these two works. Furthermore, our results indicate that optimal policies for achieving the maximum divergence are very different when the receiver knows or does not know the channel state.

The remaining part of this paper is organized as follows. In section~\ref{sec:model}, we describe the channel model for the unifilar channel and the class of encoding and decoding strategies. In section~\ref{sec:bound}, we analyze the  drifts of the posterior message entropy in the large- and small-entropy regime. In section~\ref{sec:C_1}, we formulate two MDPs in order to study the problem of one-bit transmission over this channel.
Section~\ref{sec:example} presents numerical results for several unifilar channels. Final conclusions are given in section~\ref{sec:conclusions}.

%%%%%%%%%%%%%%%%%%%%%%%%%%%%%%%%%%%%%%%%%%%%%%%%%%%%%%%%%%
%%%%%%%%%%%%%%%%%%%%%%%%%%%%%%%%%%%%%%%%%%%%%%%%%%%%%%%%%%
%%%%%%%%%%%%%%%%%%%%%%%%%%%%%%%%%%%%%%%%%%%%%%%%%%%%%%%%%%
\section{Channel Model and preliminaries}
\label{sec:model}

Consider a family of finite-state point-to-point channels with inputs $X_t\in\cX$, output $Y_t\in\cY$ and state $S_t\in\cS$ at time $t$, with all alphabets being finite and initial state $S_1=s_1$ known to both the transmitter and the receiver.
The channel conditional probability is
\begin{equation}
P(Y_t,S_{t+1}|X^t, Y^{t-1}, S^t) = Q(Y_t|X_t, S_t) \delta_{g(S_t,X_t,Y_t)}(S_{t+1}),
\end{equation}
for a given stochastic kernel $Q\in \cX\times\cS\rightarrow \cP(\cY)$ and deterministic function $g\in \cS\times\cX \times\cY\rightarrow \cS$, where $\cP(\cY)$ denotes the space of all probability measure on $\cY$, and $\delta_{a}(\cdot)$ is the Kronecker delta function centered at $a$.
This family of channels is referred to as unifilar channels~\cite{PeCuVaWe08}. The authors in~\cite{PeCuVaWe08} have derived the capacity $C$ under certain conditions in the form of
\begin{equation}
C =  \lim_{N\rightarrow \infty}\sup_{(p(x_t|s_t,y^{t-1},s_1))_{t\geq 1}}\frac{1}{N}\sum_{t=1}^{N} I(X_t,S_t;Y_t|Y^{t-1},S_1). \label{eq:capacity}
\end{equation}
In this paper, we restrict our attention to such channels with strictly positive $Q(y|x,s)$ for any $(y,x,s)\in \cY \times \cX \times \cS$ and ergodic behavior so that the above limit indeed exists.
Let $W\in\{1,2,3,\cdots,M=2^K\}\eqdef [M]$ be the message to be transmitted.
In this system, the transmitter receives perfect feedback of the output with unit delay and decides the input $X_t$ based on $(Y^{t-1},S_1)$ at time $t$.
The transmitter can adopt randomized encoding strategies, where $X_t \sim e_t(\cdot|W,Y^{t-1},S_1)$ with a collection of distributions $(e_t: [M] \times \cY^{t-1} \times \cS \rightarrow \cP(\cX))_{t\geq 1}$.
Without loss of generality we can represent the randomized encoder through deterministic mappings $(e_t: [M] \times \cY^{t-1} \times \cS \times \cV \rightarrow \cX)_{t\geq 1}$ with
$X_t = e_t(W,Y^{t-1},S_1,V_t)$ involving the random variables $(V_t)_{t\geq 1}$ which are generated as
$P(V_t|V^{t-1},Y^{t-1},X^{t-1},S^{t-1},W)=P_V(V_t)$. Furthermore, since we are interested in error exponent upper bounds, we can assume that the random variables $V_t$ are causally observed common information among the transmitter and receiver.
The decoding policy consists of a sequence of decoding functions $(d_t: (\cY\times \cV)^t \times \cS \rightarrow [M])_{t\geq 1}$, with estimated message at every time $t$, $\hat{W}_t = d_t(Y^t,V^t,S_1)$, and a stopping time $T$ w.r.t. the filtration $(\cF_t\triangleq \sigma(Y^t,V^t,S_1))_{t\geq 0}$. The final message estimate is defined as $\hat{W}=\hat{W}_T$.
The average rate $\overline{R}$ and error probability $P_e$ of this scheme are defined as $\overline{R}=\frac{K}{\EE[T]}$ and $P_e = P(\hat{W}_T \neq W \cup T= \infty)$.
The channel reliability function (highest achievable error exponent) is defined as $E^*(\overline{R})=\sup -\frac{\log P_e}{\EE[T]}$.
Since transmission schemes with $P(T=\infty)>0$ result in the trivial error exponent $-\frac{\log P_e}{\EE[T]}=0$, we restrict attention to those schemes that have a.s. finite decision times.

%%%%%%%%%%%%%%%%%%%%%%%%%%%%%%%%%%%%%%%%%%%%%%%%%%%%%%%%%%
%%%%%%%%%%%%%%%%%%%%%%%%%%%%%%%%%%%%%%%%%%%%%%%%%%%%%%%%%%
%%%%%%%%%%%%%%%%%%%%%%%%%%%%%%%%%%%%%%%%%%%%%%%%%%%%%%%%%%
\section{Error-exponent upper bound}\label{sec:bound}
Our methodology is inspired by the analysis in~\cite{Bu76} for DMCs.
The analysis involves lower-bounding the rate of decrease of the posterior message entropy which, through a generalization of Fano's Lemma, provides lower bounds on the error probability. Entropy can decrease no faster than the channel capacity. However this bound becomes trivial at low values of entropy which necessitates switching to lower bounding the corresponding log drift. The log drift analysis is quite involved in~\cite{Bu76} even for the DMC.
The fundamental difference in our work compared to DMC, is the presence of memory in unifilar channels.
A single-step drift analysis wouldn't be able to capture this memory resulting in loose bounds. For this reason we analyze multi-step drifts; in fact we consider the asymptotic behavior as the step size becomes larger and larger.
The outline of the analysis is as follows. Lemma~\ref{lemma:driftentropy} and Lemma~\ref{lemma:driftlogentropy} describe the overall decreasing rate of the entropy induced by the posterior belief on the message in terms of drifts in the linear and logarithmic regime, respectively. The former relates the drift to capacity, $C$, while the latter relates it to a quantity $C_1$ which can be interpreted as the largest discrimination that can be achieved in this channel for a binary hypothesis testing problem, as elegantly explained in~\cite{BeNaRiTe09}.
The result presented in Lemma~\ref{lemma:newsubmartingale} shows that based on a general random process that satisfies the two above mentioned drift conditions one can create an appropriate submartingale.
These three results are then combined together in Theorem~\ref{th:main} to provide a lower bound on the stopping time of an arbitrary system employing variable-length coding, and equivalently an upper bound on the error exponent.

Let us define the following random processes
\begin{align}
\Pi_t(i) &=  P(W=i|\cF_{t}) , \qquad i\in [M], t\geq 0\\
H_t &=  -\sum_{i=1}^{M}\Pi_t(i)\log \Pi_t(i).
\end{align}

Denoting by $h(\cdot)$ the binary entropy function, we have the following result.
\begin{lemma}~\label{lemma:genFano}
Generalized Fano's inequality: If $P(T<\infty)=1$ then
\begin{align}
\EE[H_T] \leq h(P_e) + P_e \log (M-1).
\end{align}
\end{lemma}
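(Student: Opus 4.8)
\emph{Proof plan.} The plan is to run the classical Fano argument, but at the stopping time $T$ rather than at a fixed time. The first thing I would establish is that $\EE[H_T]$ is genuinely the conditional entropy of the message $W$ given the decoder's information $\cF_T$ at the stopping time. Because time is discrete, this needs no optional-sampling machinery: for each $n$ the event $\{T=n\}$ belongs to $\cF_n$, and on that event $\Pi_T(i)=\Pi_n(i)=P(W=i\mid\cF_n)$; summing $\EE[H_T]=\sum_{n\ge0}\EE[H_n\mathbf{1}\{T=n\}]$ and using $P(T<\infty)=1$ lets me identify $\EE[H_T]$ with the conditional entropy $H(W\mid\cF_T)$, interpreted through the stopped posterior $\Pi_T(\cdot)=P(W=\cdot\mid\cF_T)$.

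Next I would introduce the error indicator $E\mdef\mathbf{1}\{W\neq\hat{W}\}$. Since the final estimate $\hat{W}=\hat{W}_T=d_T(Y^T,V^T,S_1)$ is $\cF_T$-measurable, $E$ is a deterministic function of $W$ and $\cF_T$, and $P(E=1)=P_e$ (here $P(T<\infty)=1$ makes the ``$T=\infty$'' contribution to the error event null). The chain rule for conditional entropy, together with the fact that $E$ carries no extra information once $W$ and $\cF_T$ are known, then gives
\begin{equation}
H(W\mid\cF_T)=H(W,E\mid\cF_T)=H(E\mid\cF_T)+H(W\mid E,\cF_T).
\end{equation}

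I would then bound the two terms separately. Conditioning cannot increase entropy, so $H(E\mid\cF_T)\le H(E)=h(P_e)$. For the second term I split on $E$: on $\{E=0\}$ we have $W=\hat{W}$, which is $\cF_T$-measurable, so the conditional entropy is $0$ there; on $\{E=1\}$, once the $\cF_T$-measurable value $\hat{W}$ is fixed, $W$ can take at most $M-1$ values, so the conditional entropy is at most $\log(M-1)$. Averaging over $E$ yields $H(W\mid E,\cF_T)\le P_e\log(M-1)$, and combining the two bounds gives $\EE[H_T]=H(W\mid\cF_T)\le h(P_e)+P_e\log(M-1)$.

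The bulk of this is routine; the one place to be careful is the very first step --- confirming that the stopped posterior $\Pi_T$ really equals $P(W=\cdot\mid\cF_T)$, so that the subsequent conditional-entropy identities are on firm footing. I expect that handling this via the discrete decomposition over $\{T=n\}$, rather than quoting an optional-sampling theorem, will be the cleanest route and will keep the measure-theoretic bookkeeping to a minimum.
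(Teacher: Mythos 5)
Your proof is correct, and it takes a somewhat different route from the paper's. The paper slices on $\{T=n\}$, applies the classical Fano inequality conditioned only on the pair $(\hat{W},T=n)$, and then needs two Jensen steps: concavity of entropy to pass from $H(W\mid \hat{W}=j,T=n)$ down to $\EE[H_n\mid \hat{W}=j,T=n]$ (since $H_n$ is the entropy of the posterior given the \emph{finer} information $\cF_n$), and concavity of $h(\cdot)$ to average the per-$n$ bounds $h(P(\cE\mid T=n))$ into $h(P_e)$. You instead condition on the full $\sigma$-algebra $\cF_T$ from the start, so that $\EE[H_T]$ \emph{is} the conditional entropy $H(W\mid\cF_T)$ and the chain-rule decomposition $H(W\mid\cF_T)=H(E\mid\cF_T)+H(W\mid E,\cF_T)$ delivers $h(P_e)$ directly as $H(E)$; both Jensen steps disappear, as does the paper's preliminary Cauchy argument for $\lim_n\EE[H_{T\wedge n}]$ (boundedness $0\le H_T\le\log M$ and $P(T<\infty)=1$ already justify the decomposition $\EE[H_T]=\sum_n\EE[H_n\mathbf{1}\{T=n\}]$). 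The one point you rightly flag --- that $\Pi_T(\cdot)=P(W=\cdot\mid\cF_T)$ on $\{T=n\}$ because $\{T=n\}\in\cF_n\cap\cF_T$ --- is exactly the fact the paper also uses (its steps (a)--(b) in computing $P(W=i\mid\hat{W}=j,T=n)$), and your discrete handling of it is sound since all conditioning variables here are discrete. The trade-off is minor: your argument is shorter and closer to the textbook Fano proof, while the paper's version isolates the classical (unconditional) Fano inequality as a black box and makes explicit where the averaging over the random stopping time enters; both yield the identical bound.
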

\begin{proof}
Please refer to Appendix~\ref{app:genFano}.
\end{proof}
This is essentially~\cite[Lemma 1]{Bu76}
%and the proof is not presented here due to space limitations.
and the proof is presented here for completeness.
%The proof can also be found in~\cite[Lemma~3]{newpaperBC}.
In view of the above, in order to estimate the rate of decrease of $P_e$, we study the corresponding rate for $H_t$. The next lemma gives a first estimate of the drift of $(H_t)_{t\geq 0}$.

\begin{lemma} \label{lemma:driftentropy}
For any $t\geq 0$ and $\epsilon>0$, there exists an $N=N(\epsilon)$ such that
\begin{equation}
\EE[H_{t+N} - H_{t}|\mathcal{F}_{t}] \geq -N(C+\epsilon) \qquad a.s.
\end{equation}
\end{lemma}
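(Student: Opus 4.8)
The plan is to express the $N$-step decrease of the posterior entropy exactly as a conditional mutual information between the message and the observations gathered in the window $(t,t+N]$, and then to bound that mutual information by $N$ times the finite-horizon capacity of the channel restarted at time $t+1$. Working under the regular conditional law given a realization of $\mathcal{F}_t$, the chain rule for entropy applied to the posterior process yields the a.s.\ identity
\begin{equation}
\EE[H_t - H_{t+N}\mid\mathcal{F}_t] = I\big(W\,;\,Y_{t+1}^{t+N},V_{t+1}^{t+N}\,\big|\,\mathcal{F}_t\big),
\end{equation}
because $H_t$ is the conditional entropy of $W$ given $\mathcal{F}_t$ while $\EE[H_{t+N}\mid\mathcal{F}_t]$ is its conditional entropy after additionally observing $(Y_{t+1}^{t+N},V_{t+1}^{t+N})$. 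Since the $(V_\tau)$ are generated independently of $W$ and of the past, $I(W;V_{t+1}^{t+N}\mid\mathcal{F}_t)=0$, so the right-hand side equals $I(W;Y_{t+1}^{t+N}\mid V_{t+1}^{t+N},\mathcal{F}_t)$, and it is enough to upper bound this by $N(C+\epsilon)$ for all $N\ge N(\epsilon)$, uniformly in $t$ and in the realization.

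Next I would single-letterize. Conditioned on $(V_{t+1}^{t+N},\mathcal{F}_t)$, the channel kernel $Q(\cdot\mid x,s)\,\delta_{g(s,x,y)}(\cdot)$ shows that once $(X_{t+1}^{t+N},S_{t+1})$ is fixed the law of $Y_{t+1}^{t+N}$ no longer depends on $W$, so $W-(X_{t+1}^{t+N},S_{t+1})-Y_{t+1}^{t+N}$ is Markov and the data-processing inequality gives $I(W;Y_{t+1}^{t+N}\mid V_{t+1}^{t+N},\mathcal{F}_t)\le I(X_{t+1}^{t+N},S_{t+1};Y_{t+1}^{t+N}\mid V_{t+1}^{t+N},\mathcal{F}_t)$. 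Expanding the latter by the chain rule over the output block and using that $\sigma(Y_{t+1}^{\tau-1},V_{t+1}^{\tau},\mathcal{F}_t)=\sigma(Y^{\tau-1},V^{\tau},S_1)$, that $(X_\tau,S_\tau)$ is a deterministic function of $(S_{t+1},X_{t+1}^{\tau-1},Y_{t+1}^{\tau-1})$, and that $Y_\tau$ depends on the past only through $(X_\tau,S_\tau)$, each summand collapses to $I(X_\tau,S_\tau;Y_\tau\mid Y^{\tau-1},V^{\tau},S_1)$, whence
\begin{equation}
I\big(W;Y_{t+1}^{t+N}\mid V_{t+1}^{t+N},\mathcal{F}_t\big)\ \le\ \sum_{\tau=t+1}^{t+N} I\big(X_\tau,S_\tau;Y_\tau\mid Y^{\tau-1},V^{\tau},S_1\big).
\end{equation}
Conditioning on a value of the common randomness turns the encoder into a deterministic causal input strategy, so averaging back over it bounds the right-hand side by the supremum, over causal input strategies, of the length-$N$ window-sum of the per-step terms $I(X_\tau,S_\tau;Y_\tau\mid Y^{\tau-1},S_1)$ appearing in~\eqref{eq:capacity}.

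The remaining step, which I expect to be the crux, is to show this supremum is at most $N(C+\epsilon)$ once $N\ge N(\epsilon)$, where $N(\epsilon)$ must be independent of $t$ and of the realization of $\mathcal{F}_t$. For a window anchored at time $1$ this is exactly the content of the definition~\eqref{eq:capacity} of $C$ as a Cesàro limit of suprema. The difficulty is that our window begins at $t+1$ and starts from an arbitrary past, in particular from a non-degenerate conditional distribution on $S_{t+1}$; the strict positivity of $Q(y\mid x,s)$ and the ergodicity assumed after~\eqref{eq:capacity} are precisely what is needed to make the finite-horizon per-step capacity from an arbitrary initial state converge to $C$ uniformly. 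I would make this quantitative by a sub-additivity/coupling argument on the finite-horizon capacities, using positivity of $Q$ to bound the effect of not knowing the initial state so that it contributes only an $o(N)$ term, absorbable into $N\epsilon$. Combining the three steps gives $\EE[H_{t+N}-H_t\mid\mathcal{F}_t]=-I(W;Y_{t+1}^{t+N},V_{t+1}^{t+N}\mid\mathcal{F}_t)\ge -N(C+\epsilon)$ a.s., as claimed.
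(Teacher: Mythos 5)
Your overall strategy coincides with the paper's: identify the $N$-step entropy drop with a realization-conditioned mutual information, single-letterize it down to $\sum_{\tau=t+1}^{t+N} I(X_\tau,S_\tau;Y_\tau\mid Y^{\tau-1},V^{\tau},S_1)$, and invoke the capacity formula~\eqref{eq:capacity}. However, there is one genuinely broken step. You justify $I(W;Y_{t+1}^{t+N}\mid V_{t+1}^{t+N},\mathcal{F}_t)\le I(X_{t+1}^{t+N},S_{t+1};Y_{t+1}^{t+N}\mid V_{t+1}^{t+N},\mathcal{F}_t)$ by the block Markov chain $W-(X_{t+1}^{t+N},S_{t+1})-Y_{t+1}^{t+N}$, claiming that once the whole input block is fixed the law of the outputs no longer depends on $W$. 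With feedback this is false: since $X_\tau=e_\tau(W,Y^{\tau-1},V^\tau,S_1)$, conditioning on the entire input block restricts the output sequences to those consistent with the encoding map for the particular message, i.e.\ $P(y_{t+1}^{t+N}\mid x_{t+1}^{t+N},s_{t+1},w,\cdots)\propto \prod_\tau Q(y_\tau\mid x_\tau,s_\tau)\,1\{x_\tau=e_\tau(w,y^{\tau-1},\cdot)\}$, which depends on $w$ through the indicators. This is precisely the classical reason why block data processing fails for feedback channels and causal (directed-information-type) decompositions are required. The repair is to reverse the order of your two operations: first apply the chain rule to $I(W;Y_{t+1}^{t+N}\mid V_{t+1}^{t+N},\mathcal{F}_t)=\sum_\tau I(W;Y_\tau\mid Y_{t+1}^{\tau-1},V_{t+1}^{\tau},\mathcal{F}_t)$, and only then bound each summand by $I(X_\tau,S_\tau;Y_\tau\mid\cdot)$ using the valid \emph{per-step} Markov property $Y_\tau\perp (W,V_\tau)\mid(X_\tau,S_\tau,Y^{\tau-1},V^{\tau-1},S_1)$ together with the fact that $(X_\tau,S_\tau)$ is a deterministic function of $(W,Y^{\tau-1},V^\tau,S_1)$. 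This is exactly what the paper does, phrased as a telescoping of one-step drifts, each bounded as in~\eqref{eq:onedrift}.

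On your final step: you are right that the real content is the uniformity (in $t$ and in the realized past, hence in the effective initial state distribution) of the bound $\frac{1}{N}\sum_\tau I(X_\tau,S_\tau;Y_\tau\mid\cdot)\le C+\epsilon$. The paper disposes of this by directly citing~\eqref{eq:capacity} together with the standing positivity and ergodicity assumptions, without the quantitative coupling argument you sketch; so you are not missing anything the paper supplies there, and your flagging of the issue is if anything more careful than the original.
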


\begin{IEEEproof}
Please see appendix~\ref{app:lemma1}.
\end{IEEEproof}

Since for small values of $H_t$ the above result does not give any information, we now analyze the drifts of the process $(\log H_t)_{t\geq 0}$.

\begin{lemma} \label{lemma:driftlogentropy}
For any given  $\epsilon>0$, there exists an $N=N(\epsilon)$ such that if $H_t<\epsilon$
\begin{equation}
\EE[\log(H_{t+N})-\log(H_t)|\mathcal{F}_t] \geq   -N(C_1+\epsilon) \qquad a.s.
\end{equation}
where the constant $C_1$ is given by
\begin{align}\label{eq:defC1}
%C_1 =\max_{s_1,y^t,k}\limsup_{N'\rightarrow \infty} \max_{\{e_i\}_{i=t+1}^{t+N'}}\frac{1}{N'} \sum_{Y^{t+N'}_{t+1}} P(Y^{t+N'}_{t+1}|W=k,y^t,s_1) \log \frac{P(Y^{t+N'}_{t+1}|W=k,y^t,s_1)}{\sum_{j\neq k}\frac{P(W=j|y^t,s_1)}{1-P(W=k|y^t,s_1)} P(Y^{t+N'}_{t+1}|W=j,y^t,s_1)}.
&C_1 =\max_{s_1,y^t,v^t,k}\limsup_{N'\rightarrow \infty} \max_{(e_i)_{i=t+1}^{t+N'}}\frac{1}{N'}
\nonumber \\
&\sum_{Y^{t+N'}_{t+1},V^{t+N'}_{t+1}} P(Y^{t+N'}_{t+1},V^{t+N'}_{t+1}|W=k,y^t,v^t,s_1) \log
\frac{P(Y^{t+N'}_{t+1},V^{t+N'}_{t+1}|W=k,y^t,v^t,s_1)}{ P(Y^{t+N'}_{t+1},V^{t+N'}_{t+1}|W\neq k,y^t,v^t,s_1)}.
\end{align}
\end{lemma}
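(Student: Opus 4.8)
The plan is to funnel the statement through the posterior odds of whichever message currently carries almost all of the posterior mass. When $H_t<\epsilon$, the belief $\Pi_t$ is sharply concentrated: letting $k$ denote a message attaining $\max_i\Pi_t(i)$ and $q_t:=1-\Pi_t(k)$, elementary estimates give $q_t\log(1/q_t)\le H_t\le q_t\big(1+\log(M-1)+\log(1/q_t)\big)$, so $H_t<\epsilon$ forces $q_t$ to be small (how small depending only on $\epsilon$) and $\log H_t=\log q_t+\log\theta_t$ with $\theta_t$ between $\log(1/q_t)$ and $\log(1/q_t)+O(1)$. I expect the nonlinear correction $\log\theta_t$ to cost only an $o_\epsilon(1)$ additive error per block: writing $a_t:=\log(1/q_t)$, one checks that in the only regime $q_{t+N}\le q_t$ capable of producing a negative drift the function $x\mapsto x-\log x$ is increasing for $x\ge1$, so the $\log\log$ terms help rather than hurt, while the case $H_{t+N}\ge\epsilon$ makes the drift trivially positive. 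Thus it suffices to upper bound the drift of the log-odds $L_t(k):=\log\frac{\Pi_t(k)}{1-\Pi_t(k)}$.

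The second ingredient is a Bayes/telescoping identity. Since $X_t=e_t(W,Y^{t-1},S_1,V_t)$ and the $V_t$ are common randomness independent of $W$, the update is $\Pi_{t+1}(i)\propto\Pi_t(i)\,P(Y_{t+1},V_{t+1}\mid W=i,\mathcal{F}_t)$, and telescoping over the block $(t,t+N]$ yields
\[ L_{t+N}(k)-L_t(k)=\log\frac{P(Y^{t+N}_{t+1},V^{t+N}_{t+1}\mid W=k,\mathcal{F}_t)}{P(Y^{t+N}_{t+1},V^{t+N}_{t+1}\mid W\neq k,\mathcal{F}_t)}, \]
where $\{W\neq k\}$ carries the posterior weights $\Pi_t(j)/(1-\Pi_t(k))$. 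The point is that, under the uniform prior, these posterior weights coincide with $P(W=j\mid W\neq k,y^t,v^t,s_1)$, so the right-hand side is exactly the log-likelihood ratio that $C_1$ in \eqref{eq:defC1} maximizes. Taking $\EE[\,\cdot\mid\mathcal{F}_t]$, decomposing $P(\cdot\mid\mathcal{F}_t)=\Pi_t(k)P(\cdot\mid W=k,\mathcal{F}_t)+(1-\Pi_t(k))P(\cdot\mid W\neq k,\mathcal{F}_t)$, and using convexity of KL divergence in its first argument gives
\[ \EE[L_{t+N}(k)-L_t(k)\mid\mathcal{F}_t]\le\Pi_t(k)\,D\!\big(P(\cdot\mid W=k,\mathcal{F}_t)\,\big\|\,P(\cdot\mid W\neq k,\mathcal{F}_t)\big)\le N(C_1+\epsilon') \]
for all $N\ge N(\epsilon')$: the $\max$ over encoders in \eqref{eq:defC1} dominates whatever (possibly randomized) encoder the system uses on the block, and the $\limsup_{N'\to\infty}$ supplies the finite-block bound. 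Uniformity of $N(\epsilon')$ in $t$ despite the alphabet $\cY^t\times\cV^t$ growing with $t$ is recovered from the unifilar structure: the block channel sees $(s_1,y^t,v^t,k)$ only through the state $S_t\in\cS$ under $W=k$ and the induced distribution on $\cS$ under $W\neq k$, both ranging over a fixed compact set --- this is where the ergodicity/positivity hypotheses enter.

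It remains to transport the log-odds bound back to $\log H$. From the display above and $L_t(k)=-\log q_t+O(q_t)$ one gets $\EE[-\log(1-\Pi_{t+N}(k))\mid\mathcal{F}_t]\le-\log q_t+N(C_1+\epsilon')+\EE[-\log\Pi_{t+N}(k)\mid\mathcal{F}_t]$, and a short Markov-inequality argument on the likelihood ratios $P(\cdot\mid W=j)/P(\cdot\mid W=k)$ --- which have unit mean under $W=k$ --- shows $\EE[-\log\Pi_{t+N}(k)\mid\mathcal{F}_t]=O(1)$. Since at most one message can exceed probability $1/2$ (and otherwise $H_{t+N}\ge1>\epsilon$), the same Markov estimate shows that the event that a message $j\neq k$ overtakes $k$ within the block is a large deviation contributing only $O(1)$ to $\EE[-\log q_{t+N}\mid\mathcal{F}_t]$. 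Combining this with the sandwich of the first paragraph (using the MAP message at time $t+N$ for the lower bound on $\log H_{t+N}$, so that the $\log\log$ term cancels the $\log\theta_t$ correction as indicated there) yields $\EE[\log H_{t+N}-\log H_t\mid\mathcal{F}_t]\ge-N(C_1+\epsilon)$ once $N=N(\epsilon)$ is chosen large enough to swallow the accumulated $O(1)$ and $o_\epsilon(1)$ slack.

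I expect the last step --- translating a single log-odds back into the entropy --- to be the main obstacle, for precisely the reason flagged in the introduction that this portion of \cite{Bu76} is ``opaque'': $\log H_t$ is governed by whichever message is momentarily most confident, and that message can change inside the block, on which event $\log H$ can fall faster than any fixed multiple of $N$. The work is in showing this event is rare enough (via the Markov estimates above) that its expected contribution is only $O(1)$, and in carrying the $\log\log$ corrections through the attendant case analysis; this bookkeeping is what makes the proof long. The secondary subtlety --- that $N(\epsilon)$ must be chosen uniformly in $t$ --- is handled by the unifilar state reduction noted above.
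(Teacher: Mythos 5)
Your proposal lands on the same key object as the paper's proof --- the divergence between the block observation law under $W=k$ and the posterior-reweighted mixture $\sum_{j\neq k}\hat f_j\,P(\cdot\mid W=j)$, which is exactly what the $\max$/$\limsup$ in \eqref{eq:defC1} bounds --- but reaches it by a genuinely different route. The paper (Appendix C) never introduces the log-odds process: it expands $H_{t+N}=-\sum_i f_i(Y)\log f_i(Y)$ directly to first order in $1-f_k$, observes that the dominant term is $-(1-f_k)\log(1-f_k)\sum_{j\neq k}\hat f_j\hat Q(\cdot\mid j)/\hat Q(\cdot\mid k)$ while $H_t\approx-(1-f_k)\log(1-f_k)$, so that $\log H_{t+N}-\log H_t$ collapses at leading order to the negative log-likelihood ratio, and then takes the expectation under $P(\cdot\mid\mathcal{F}_t)\approx\hat Q(\cdot\mid k)$. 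This direct expansion is what lets the paper avoid the step you correctly identify as the main obstacle --- transporting a bound on $L_t(k)$ back to $\log H_t$, including the change-of-MAP event and the $\log\log$ corrections --- because the expansion keeps the full entropy (including the residual conditional entropy $H(\hat f_t)$, which can be of order $\log M$ and does not obviously cancel in your sandwich; you would need the additional observation that $H(\hat f)$ can drop by at most $O(N)$ over a block to close that case) inside a single multiplicative $1+o(1)$ factor. Your route buys an exact Bayes/telescoping identity and a cleaner separation of the probabilistic content from the analytic bookkeeping; the paper's buys brevity at the price of informal $o(\cdot)$ arguments that hide the same difficulties. Both correctly reduce the uniformity-in-$t$ issue to the unifilar state and the positivity of $Q$. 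Net: your approach is sound and would yield the lemma, but the transport paragraph is the part that still needs to be written out in full, and it is nontrivial.
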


\begin{IEEEproof}
Please see appendix~\ref{app:lemma2}.
\end{IEEEproof}

We comment at this point that the proof of this result is significantly simpler than the corresponding one  in~\cite[Lemma~3]{Bu76}. The reason is that we develop the proof directly in the asymptotic regime and thus there is no need for complex convexity arguments as the ones derived in~\cite[Lemma~7, and eq. (A8)-(A12)]{Bu76}.
At this point one can bound the quantity in~\eqref{eq:defC1} by $\max_{x,s,x's'} D(Q(y|x,s)||Q(y|x',s')) $ using convexity. Such a bound, however, can be very loose since it does not account for channel memory. In Section~\ref{sec:C_1}, we will discuss how to evaluate $C_1$.
Before we continue, we also note that $|\log H_{t+1} - \log H_t|$ is bounded above by a positive number $C_2$ almost surely due to the fact that kernel $Q(\cdot|\cdot,\cdot)$ is strictly positive. The proof is similar to that in~\cite[Lemma~4]{Bu76}.

In the following lemma, we propose a submartingale that connects drift analysis and the stopping time in the proof of our main result.

\begin{lemma} \label{lemma:newsubmartingale}
Suppose a random process $(H_t)_{t\geq 0}$ has the following properties
\begin{subequations}
\begin{align}
\EE[H_{t+1}-H_{t}|\mathcal{F}_t] &\geq -K_1 \label{ineq:entropy}\\
\EE[\log H_{t+1}-\log H_{t}|\mathcal{F}_t] &\geq -K_2 \label{ineq:logentropysmall}  \qquad \text{if } H_t<H^* \\
|\log H_{t+1}-\log H_{t}| &< K_3  \label{ineq:logentropyall} \qquad \text{if } H_t<H^*
\end{align}
\end{subequations}
almost surely for some positive numbers $K_1,K_2,K_3,H^*$ where $K_2>K_1$.

Define a process $(Z_t)_{t\geq 0}$ by
\begin{align} \label{def:newsubmartingale}
Z_t &= (\frac{H_t-H^*}{K_1}+t)1_{\{H_t>H^*\}} \nonumber \\
 &\ \ + (\frac{\log \frac{H_t}{H^*}}{K_2}+t+f(\log \frac{H_t}{H^*}))1_{\{H_t\leq H^*\}} \quad \forall t\geq 0,
\end{align}
where $f: \mathbb{R} \rightarrow \mathbb{R} $ is defined by
\begin{equation}
f(y) = \frac{1-e^{\lambda y}}{K_2\lambda}
\end{equation}
with a positive constant $\lambda$. Then, for sufficiently small $\lambda$, $(Z_t)_{t\geq 0}$ is a submartingale w.r.t. $\mathcal{F}_t$.
\end{lemma}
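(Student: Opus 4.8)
The plan is to verify the submartingale property $\EE[Z_{t+1} - Z_t \mid \mathcal{F}_t] \geq 0$ by case analysis on the value of $H_t$ relative to the threshold $H^*$, and, within the sub-threshold case, to show that the correction term $f$ is chosen precisely so that the conditional expectation of the Jensen-type defect from moving $\log H_t$ around is absorbed. First I would handle the case $H_t > H^*$: here $Z_t = (H_t - H^*)/K_1 + t$, and since $Z_{t+1}$ is at least $(H_{t+1}-H^*)/K_1 + (t+1)$ plus possibly the difference between the two branch expressions if $H_{t+1}$ crosses below $H^*$, I would first argue that the two branches of the definition are "matched" at $H=H^*$ (both give $t$ there) and that the sub-threshold branch dominates the super-threshold branch pointwise for $H < H^*$ — this is where the term $f(\log(H/H^*))$, which is nonnegative for $\lambda>0$ small and $\log(H/H^*)<0$, is doing work, together with the elementary inequality $\log(H/H^*) \geq K_2 (H - H^*)/(K_1 H^*)$-type bound that one gets from concavity of $\log$ combined with $K_2 > K_1$. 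Granting that domination, $Z_{t+1} \geq (H_{t+1} - H^*)/K_1 + (t+1)$ always, so $\EE[Z_{t+1}-Z_t \mid \mathcal{F}_t] \geq \EE[H_{t+1}-H_t \mid \mathcal{F}_t]/K_1 + 1 \geq -K_1/K_1 + 1 = 0$ by \eqref{ineq:entropy}.

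Next I would treat the main case $H_t \leq H^*$. Write $\Delta = \log(H_{t+1}/H_t)$, so by \eqref{ineq:logentropyall} we have $|\Delta| < K_3$ on this event. On the sub-event $\{H_{t+1} \leq H^*\}$ the increment of $Z$ is $\Delta/K_2 + 1 + f(\log(H_{t+1}/H^*)) - f(\log(H_t/H^*))$; on $\{H_{t+1} > H^*\}$ it is at least the same quantity by the branch-domination argument above applied at time $t+1$. So in all cases, writing $u = \log(H_t/H^*) \leq 0$,
\begin{equation}
Z_{t+1} - Z_t \geq \frac{\Delta}{K_2} + 1 + f(u + \Delta) - f(u).
\end{equation}
Now $f(u+\Delta) - f(u) = \frac{e^{\lambda u}}{K_2\lambda}(1 - e^{\lambda\Delta})$. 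Using $1 - e^{\lambda\Delta} \geq -\lambda\Delta - \frac{1}{2}\lambda^2\Delta^2 e^{\lambda|\Delta|} \geq -\lambda\Delta - c\lambda^2$ for $|\Delta| < K_3$ with $c = \frac{1}{2}K_3^2 e^{\lambda K_3}$ bounded for $\lambda$ small, and using $0 < e^{\lambda u} \leq 1$, I would get
\begin{equation}
f(u+\Delta) - f(u) \geq \frac{e^{\lambda u}}{K_2}(-\Delta) - \frac{c\lambda}{K_2}.
\end{equation}
Taking $\EE[\cdot \mid \mathcal{F}_t]$ and using \eqref{ineq:logentropysmall} which gives $\EE[\Delta \mid \mathcal{F}_t] \geq -K_2$, the dominant terms combine as $\frac{1}{K_2}\EE[\Delta\mid\mathcal{F}_t] + 1 - \frac{e^{\lambda u}}{K_2}\EE[\Delta\mid\mathcal{F}_t] - \frac{c\lambda}{K_2} = 1 + \frac{1 - e^{\lambda u}}{K_2}\EE[\Delta\mid\mathcal{F}_t] - \frac{c\lambda}{K_2}$. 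Since $u \leq 0$ we have $1 - e^{\lambda u} \in [0,1)$, and I will need the complementary bound $\EE[\Delta \mid \mathcal{F}_t] \geq -K_3$ only if $\EE[\Delta\mid\mathcal F_t]\ge 0$ is false; in fact it suffices that $(1-e^{\lambda u})\EE[\Delta\mid\mathcal F_t] \ge -K_3$ crudely, so the whole expression is $\geq 1 - K_3(1-e^{\lambda u})/K_2 - c\lambda/K_2$, and since $1 - e^{\lambda u} \leq -\lambda u = \lambda\log(H^*/H_t)$ can be large, I instead keep $e^{\lambda u}$ attached to the true increment: rearranging more carefully, the net lower bound is $1 + \frac{1}{K_2}(1-e^{\lambda u})\EE[\Delta\mid\mathcal F_t] - c\lambda/K_2 \ge 1 - \frac{c\lambda}{K_2} - \frac{1-e^{\lambda u}}{K_2}\cdot K_2 = 1 - \frac{c\lambda}{K_2} - (1-e^{\lambda u})$, wait — this is exactly where the argument is delicate, so let me record it as the crux below.

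\textbf{The main obstacle} is precisely the bookkeeping in the sub-threshold case: naively, $(1 - e^{\lambda u})$ can be close to $1$ (when $H_t \ll H^*$), and then one seems to lose the whole "$+1$" drift. The resolution — which I expect is the actual content of the lemma — is that one must \emph{not} bound $\EE[\Delta \mid \mathcal{F}_t]$ from below by $-K_2$ and $e^{\lambda u}$ by its worst case simultaneously; instead one uses that $\frac{\Delta}{K_2} + f(u+\Delta)-f(u) = \frac{\Delta}{K_2} + \frac{e^{\lambda u}}{K_2\lambda}(1 - e^{\lambda \Delta})$ is, as a function of $\Delta$ on $(-K_3, K_3)$, bounded below by its value dictated by convexity of $-e^{\lambda\Delta}$, and the point is that the linear-in-$\Delta$ terms combine to $\frac{\Delta}{K_2}(1 - e^{\lambda u})$ whose conditional expectation is $\geq -\frac{K_2 \wedge (\text{something})}{K_2}(1-e^{\lambda u})$ — and here crucially when $H_t$ is very small, $u$ is very negative, $e^{\lambda u}$ is tiny, so $1 - e^{\lambda u} \approx 1$ but the \emph{drift condition} \eqref{ineq:logentropysmall} still only costs us $-K_2 \cdot 1/K_2 = -1$, landing us at $\geq 1 - 1 - c\lambda/K_2 < 0$. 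This shows a bare choice of $f$ cannot work unless one also exploits that $K_2 > K_1$ strictly together with the linear drift \eqref{ineq:entropy} to get a better-than-$-K_2$ bound on $\EE[\Delta\mid\mathcal F_t]$ when $H_t$ is small (e.g. via $\EE[H_{t+1}-H_t\mid\mathcal F_t]\ge -K_1$ and Jensen $\EE[\log H_{t+1}\mid\mathcal F_t] \le \log\EE[H_{t+1}\mid\mathcal F_t]$ going the wrong way, so one needs the reverse estimate carefully). I would therefore, as the decisive step, prove the refined bound $\EE[\Delta \mid \mathcal{F}_t] \geq -K_2$ is complemented by $\EE[e^{\lambda u}\cdot(-\Delta)/K_2 \mid \mathcal F_t]$ being small because $e^{\lambda u}$ is $\mathcal F_t$-measurable and small, so the bad term is $e^{\lambda u}\cdot K_2/K_2 = e^{\lambda u} \to 0$; then the net is $\geq 1 + \frac{1}{K_2}\EE[\Delta\mid\mathcal F_t] - e^{\lambda u}\cdot\frac{|\EE[\Delta\mid\mathcal F_t]|}{K_2}\cdot\mathbf 1 - c\lambda/K_2 \ge 1 - 1\cdot\mathbf 1_{\{\EE[\Delta]<0\}} + \dots$ — and this finally closes because when $\EE[\Delta\mid\mathcal F_t] \ge 0$ there is nothing to prove, while when it is negative the term $\frac{1}{K_2}(1-e^{\lambda u})\EE[\Delta\mid\mathcal F_t] \ge \frac{1}{K_2}(1-e^{\lambda u})(-K_2) = -(1-e^{\lambda u}) \ge -1 + e^{\lambda u}$, giving net $\ge e^{\lambda u} - c\lambda/K_2 > 0$ for $\lambda$ small since $e^{\lambda u}$ is bounded below on the relevant range. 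I would present this last chain as the heart of the proof and relegate the branch-domination inequality and the elementary expansion of $e^{\lambda\Delta}$ to short preliminary claims.
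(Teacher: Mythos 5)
Your overall architecture is the same as the paper's: split on $H_t \gtrless H^*$, use the fact that on each side the active branch of $Z$ dominates the inactive one (so a threshold crossing can only help), and in the sub-threshold case reduce everything to bounding $\EE[\Delta/K_2 + 1 + f(u+\Delta) - f(u)\mid \cF_t]$ with $u=\log(H_t/H^*)$ and $\Delta=\log(H_{t+1}/H_t)$. That reduction, and the super-threshold case, are fine (with the caveat that the branch-domination inequality you invoke, like the paper's conditions, only holds for $\log(H/H^*)$ in a bounded window around $0$, not for all $H<H^*$; your elementary inequality $\log(H/H^*)\ge K_2(H-H^*)/(K_1H^*)$ fails as $H\to 0$).

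The genuine gap is in your final step, and you half-diagnose it yourself before resolving it incorrectly. After writing $f(u+\Delta)-f(u)=\frac{e^{\lambda u}}{K_2\lambda}(1-e^{\lambda\Delta})$, you bound the second-order remainder by replacing the prefactor $e^{\lambda u}$ with $1$, getting $f(u+\Delta)-f(u)\ge -\frac{e^{\lambda u}}{K_2}\Delta - \frac{c\lambda}{K_2}$ with $c$ an absolute constant. The linear terms then combine, via \eqref{ineq:logentropysmall} and the $\cF_t$-measurability of $1-e^{\lambda u}\ge 0$, to at least $e^{\lambda u}$, so your net bound is $e^{\lambda u}-c\lambda/K_2$. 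Your closing claim that this is positive ``since $e^{\lambda u}$ is bounded below on the relevant range'' is false: $u=\log(H_t/H^*)$ ranges over all of $(-\infty,0]$ because $H_t$ can be arbitrarily small --- indeed the log-drift regime exists precisely to handle $H_t\to 0$ --- so $e^{\lambda u}$ has no positive lower bound and $e^{\lambda u}-c\lambda/K_2<0$ once $H_t$ is small enough. The detours through $K_2>K_1$ and a refined bound on $\EE[\Delta\mid\cF_t]$ do not repair this. The correct fix, which is the actual reason $f$ is chosen exponential, is to keep the factor $e^{\lambda u}$ inside the remainder: the intermediate point $\xi$ of the Taylor remainder satisfies $|\xi-u|<K_3$ by \eqref{ineq:logentropyall}, so $f''(\xi)=-\frac{\lambda e^{\lambda\xi}}{K_2}\ge -\frac{\lambda e^{\lambda K_3}}{K_2}e^{\lambda u}$ and the remainder is at least $-\frac{\lambda e^{\lambda K_3}K_3^2}{2K_2}e^{\lambda u}$. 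The total is then $e^{\lambda u}\bigl(1-\frac{\lambda e^{\lambda K_3}K_3^2}{2K_2}\bigr)\ge 0$ for small $\lambda$, uniformly in $u\le 0$; this is exactly condition \eqref{ineq:subsub} in the paper. The point you were missing is that both the surviving main term and the error term must scale with $e^{\lambda u}$, so that only their ratio, not their difference, needs to be controlled.
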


\begin{IEEEproof}
Please see appendix~\ref{app:lemma3}.
\end{IEEEproof}

Two comments are in place regarding the proof of this result. First, the main difficulty in proving such results is to take care of what happens in the ``transition'' range (around $H^*$) where $H_t$ and $H_{t+1}$ are not both above or below the threshold. The choice of the function $f(\cdot)$ is what makes the proof work. The proof offered here is quite concise compared to the one employed in~\cite{Bu76} (which consists of Lemma~5 and an approximation argument given in Theorem~1). The reason for that is the specific definition of the $Z_t$ process and in particular the choice of the $f(\cdot)$ function which simplifies considerably the proof. The second, and related, comment is that this Lemma is not a straightforward extension of~\cite[Lemma in p.~50]{BuZi75} since there, the purpose was to bound from below a positive rate of increase of a process. In our case, the proof hinges on the additional constraint~\eqref{ineq:subdiff} we impose on the choice of the $f(\cdot)$ function.

We are now ready to present our main result.
\begin{theorem} \label{th:main}
Any transmission scheme with $M=2^K$ messages and error probability $P_e$  satisfies
\begin{equation}
-\frac{\log P_e}{\EE[T]} \leq C_1(1-\frac{\overline{R}}{C})+ U(\epsilon,K,P_e,\overline{R},C,C_1,C_2,\lambda),
\end{equation}
for any $\epsilon>0$. Furthermore, $\lim_{P_e\rightarrow0}\lim_{K\rightarrow\infty}U(\epsilon,K,\overline{R},C,C_1,C_2,\lambda)=o_{\epsilon}(1)$.
\end{theorem}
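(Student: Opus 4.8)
Here is the proof plan.

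The plan is to emulate Burnashev's converse: turn the drift estimates of Lemmas~\ref{lemma:driftentropy}--\ref{lemma:driftlogentropy} into a submartingale via Lemma~\ref{lemma:newsubmartingale}, apply optional stopping to lower-bound $\EE[T]$, and translate the resulting bound into an error-exponent bound using the generalized Fano inequality of Lemma~\ref{lemma:genFano}. Fix $\epsilon>0$ and a small $\lambda>0$, and let $N=N(\epsilon)$ be large enough that both Lemma~\ref{lemma:driftentropy} and Lemma~\ref{lemma:driftlogentropy} apply with this $N$. Since Lemma~\ref{lemma:newsubmartingale} is stated with one-step drifts whereas our estimates are $N$-step, the first move is to pass to the subsampled process $\tilde{H}_j \triangleq H_{jN}$ with filtration $\tilde{\mathcal{F}}_j \triangleq \mathcal{F}_{jN}$. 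Taking $t=jN$ in the two drift lemmas, and telescoping $|\log H_{t+1}-\log H_t|<C_2$ over $N$ steps, shows that $(\tilde{H}_j)$ satisfies \eqref{ineq:entropy}--\eqref{ineq:logentropyall} with $K_1=N(C+\epsilon)$, $K_2=N(C_1+\epsilon)$, $K_3=NC_2$ and $H^*=\epsilon$. The hypothesis $K_2>K_1$ reduces to $C_1>C$, which holds here; the degenerate case $C_1=C$ is handled by invoking Lemma~\ref{lemma:driftlogentropy} with a parameter slightly larger than $\epsilon$, affecting only lower-order terms. Lemma~\ref{lemma:newsubmartingale} then yields the submartingale $Z_j = j + g(\tilde{H}_j)$, where $g$ is the piecewise map in \eqref{def:newsubmartingale}; note $\tilde{H}_0=H_0=\log M$, so $Z_0 = g(\log M) = (\log M-\epsilon)/K_1$ for $K$ large.

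Next I would apply the optional stopping theorem to $(Z_j)$ at the $\tilde{\mathcal{F}}$-stopping time $m\triangleq\lceil T/N\rceil$. We may assume $P_e>0$ and $\EE[T]<\infty$, since otherwise the claimed inequality is trivial; then $m<\infty$ a.s. and $\EE[m]<\infty$. Each $Z_j$ is bounded (the telescoping bound forces $\tilde{H}_j\in[(\log M)e^{-jNC_2},\log M]$), so $\EE[Z_{m\wedge n}]\ge Z_0$ for every $n$. Since $g$ is bounded above --- its only unboundedness is $g(h)\to-\infty$ as $h\to0$ --- a reverse-Fatou argument in $n$ gives $\EE[m]\ge Z_0 - \EE[g(\tilde{H}_m)]$. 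To lower-bound $-\EE[g(\tilde{H}_m)]$ I split on $\{\tilde{H}_m>\epsilon\}$ versus $\{\tilde{H}_m\le\epsilon\}$: on the former $0\le g(\tilde{H}_m)\le \tilde{H}_m/K_1$, on the latter $g(\tilde{H}_m)\ge\frac1{K_2}\log(\tilde{H}_m/\epsilon)-\frac1{K_2\lambda}$; dropping an indicator and applying Jensen to the convex map $x\mapsto-\log x$ yields
\[
-\EE[g(\tilde{H}_m)]\ \ge\ -\frac{\EE[\tilde{H}_m]}{K_1}+\frac{\log\epsilon-\log\EE[\tilde{H}_m]}{K_2}-\frac1{K_2\lambda}.
\]
Finally, since $mN\in[T,T+N)$ and $|\log H_{t+1}-\log H_t|<C_2$ we have $\tilde{H}_m=H_{mN}\le e^{NC_2}H_T$ pointwise, whence $\EE[\tilde{H}_m]\le e^{NC_2}\bigl(h(P_e)+P_e\log(M-1)\bigr)=:B$ by Lemma~\ref{lemma:genFano}.

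Combining these estimates with $\EE[T]\ge N\EE[m]-N$ and the values of $K_1,K_2$ produces
\[
\EE[T]\ \ge\ \frac{\log M-\epsilon-B}{C+\epsilon}+\frac{\log\epsilon-\log B}{C_1+\epsilon}-\frac1{(C_1+\epsilon)\lambda}-N.
\]
Using $h(P_e)\le P_e(2-\log P_e)$ for $P_e\le\tfrac12$ gives $-\log B\ge -NC_2-\log P_e-\log(-\log P_e+2+\log M)$, so solving the previous display for $-\log P_e$, dividing by $\EE[T]$, and using $\log M/\EE[T]=\overline{R}$ yields
\[
\frac{-\log P_e}{\EE[T]}\ \le\ (C_1+\epsilon)\Bigl(1-\frac{\overline{R}}{C+\epsilon}\Bigr)+U,
\]
where $U$ is the explicit remainder collecting the terms involving $B$, $N$, $\lambda$ and $\log(-\log P_e+\log M)$, each divided by $\EE[T]$. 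Writing $(C_1+\epsilon)(1-\overline{R}/(C+\epsilon))=C_1(1-\overline{R}/C)+O(\epsilon)$ and absorbing the $O(\epsilon)$ into $U$ gives the stated inequality. For the limit claim, recall $N=N(\epsilon)$ (hence $e^{NC_2}$) is a constant; as $K\to\infty$ with $P_e$ fixed we have $\EE[T]\gtrsim(\log M)(1-e^{NC_2}P_e)/(C+\epsilon)\to\infty$, while $B\sim e^{NC_2}P_e\log M$ and the remaining numerators are $o(\log M)$, so $\lim_{K\to\infty}U\le O(\epsilon)+\tfrac{(C_1+\epsilon)e^{NC_2}P_e}{1-e^{NC_2}P_e}$; letting $P_e\to0$ then gives $\lim_{P_e\to0}\lim_{K\to\infty}U\le O(\epsilon)=o_\epsilon(1)$.

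I expect the main obstacle to be making the optional-stopping step rigorous given that $g$ is unbounded below: this forces the truncation/reverse-Fatou argument above, and one must verify that $\EE[g(\tilde{H}_m)]$ is actually finite (equivalently, that the Jensen bound on $\EE[\log\tilde{H}_m]$ is non-vacuous, which follows from $\tilde{H}_m\le\log M$). The second delicate point is the bookkeeping in the iterated limit --- in particular the term $B/\EE[T]\sim e^{NC_2}P_e$ does not vanish as $K\to\infty$ and is killed only by the subsequent $P_e\to0$, which is exactly why the theorem is phrased with the limits in that order; the knife-edge $C_1=C$ needed for $K_2>K_1$ is a minor matter of slack in $\epsilon$.
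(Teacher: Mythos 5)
Your proposal is correct and follows essentially the same route as the paper's proof: block-subsample at scale $N(\epsilon)$, invoke Lemma~\ref{lemma:newsubmartingale} with $K_1=N(C+\epsilon)$, $K_2=N(C_1+\epsilon)$, $K_3=NC_2$, $H^*=\epsilon$, apply optional stopping at $\hat T=\lceil T/N\rceil$, and convert $\EE[H_T]$ via Jensen and Lemma~\ref{lemma:genFano}. The only substantive deviations are points of extra care rather than of substance --- you justify optional stopping by truncation and reverse Fatou, and you bridge $H_{N\hat T}$ to $H_T$ via the pointwise bound $H_{N\hat T}\le e^{NC_2}H_T$ where the paper simply asserts $H_{N\hat T}=H_T$ because no actions are taken after $T$; both versions only perturb the remainder $U$.
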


\begin{IEEEproof}
Please see appendix~\ref{app:proposition}.
\end{IEEEproof}

%%%%%%%%%%%%%%%%%%%%%%%%%%%%%%%%%%%%%%%%%%%%%%%%%%%%%%%%%%
%%%%%%%%%%%%%%%%%%%%%%%%%%%%%%%%%%%%%%%%%%%%%%%%%%%%%%%%%%
%%%%%%%%%%%%%%%%%%%%%%%%%%%%%%%%%%%%%%%%%%%%%%%%%%%%%%%%%%
\section{Evaluation of $C_1$}
\label{sec:C_1}
In this section we evaluate the constant $C_1$. As noted in \cite{Bu76}\cite{BeNaRiTe09}, the quantity $C_1$ relates to a binary hypothesis testing problem. When the posterior entropy is small, the receiver has very high confidence in a certain message. In this situation, the transmitter is essentially trying to inform the receiver whether or not this candidate message is the true one.
Since the unifilar channel has memory, it is not surprising that the constant $C_1$ may be connected to a Markov decision process related to the aforementioned binary hypothesis testing problem, as was the case in~\cite{CoYuTa09}.
%
%For any $\epsilon>0$, there exists an $N=N(\epsilon)$ such that
Recall that $C_1$ is defined as
\begin{align}
C_1 &= \max_{s_1,y^t,v^t,k}  \limsup_{N'\rightarrow \infty} \max_{\{e_i\}_{i=t+1}^{t+N'}} \frac{1}{N'} \sum_{Y^{t+N'}_{t+1},V^{t+N'}_{t+1}} \nonumber \\
 &P(Y^{t+N'}_{t+1},V^{t+N'}_{t+1}|W=k,y^t,v^t,s_1) \log
\frac{P(Y^{t+N'}_{t+1},V^{t+N'}_{t+1}|W=k,y^t,v^t,s_1)}{ P(Y^{t+N'}_{t+1},V^{t+N'}_{t+1}|W\neq k,y^t,v^t,s_1)} \\
 &= \max_{s_1,y^t,v^t,k}  \limsup_{N'\rightarrow \infty} \max_{\{e_i\}_{i=t+1}^{t+N'}} \frac{1}{N'} \nonumber \\
 & D( P(Y^{t+N'}_{t+1},V^{t+N'}_{t+1}|W=k,y^t,v^t,s_1) ||  P(Y^{t+N'}_{t+1},V^{t+N'}_{t+1}|W\neq k,y^t,v^t,s_1) )
 \label{eq:divergence}
\end{align}

We now look into the quantities $P(Y^{t+N}_{t+1},V^{t+N}_{t+1}|W=k,y^t,v^t,s_1)$ and $P(Y^{t+N}_{t+1},V^{t+N}_{t+1}|W\neq k,y^t,v^t,s_1)$.
Let us define $X^k_{t} \eqdef e_{t}(k,Y^{t-1},S_1,V_t)$ and $S^k_{t} \eqdef g_{t}(k,Y^{t-1},V^{t-1},S_1)$ which are the input and the state at time $t$, respectively conditioned on $W=k$. Then,
\begin{align}
P(Y_{t+1}^{t+N},V^{t+N}_{t+1}|y^t,v^t,s_1,W=k)
&= \prod_{i=t+1}^{t+N} P(Y_i|Y^{i-1}_{t+1},V^{i}_{t+1},y^t,v^t,s_1,W=k)P(V_i|Y^{i-1}_{t+1},V^{i-1}_{t+1},y^t,v^t,s_1,W=k)\nonumber \\
&= \prod_{i=t+1}^{t+N} Q(Y_i|S^k_i,X^k_i)P_V(V_i)
\end{align}
and
\begin{align}
P&(Y_{t+1}^{t+N},V_{t+1}^{t+N}|y^t,v^t,s_1,W\neq k) \nonumber \\
&= \prod_{i=t+1}^{t+N} P(Y_i|Y^{i-1}_{t+1},V^{i}_{t+1},y^t,v^t,s_1,W \neq k) \nonumber \\
& \qquad P(V_i|Y^{i-1}_{t+1},V^{i-1}_{t+1},y^t,v^t,s_1,W \neq k) \nonumber \\
&=  \prod_{i=t+1}^{t+N} P_V(V_i) \sum_{x,s} Q(Y_{i}|x,s) P(X_i=x|S_i=s,Y^{i-1}_{t+1},V^{i}_{t+1},y^t,v^t,s_1,W \neq k) \nonumber \\
&\qquad P(S_i=s|Y^{i-1}_{t+1},V^{i-1}_{t+1},y^t,v^t,s_1,W \neq k) \nonumber \\
&= \prod_{i=t+1}^{t+N} P_V(V_i) \sum_{x,s} Q(Y_{i}|x,s) X^{\overline{k}}_i(x|s) B^{\overline{k}}_{i-1}(s),
\end{align}
where $X^{\overline{k}}_i(x|s)$ and $B^{\overline{k}}_{i-1}(s)$ are given by
\begin{align}
X^{\overline{k}}_i(x|s) &\eqdef  P(X_i=x|S_i=s,Y^{i-1}_{t+1},V^{i}_{t+1},y^t,v^t,s_1,W \neq k) \\
B^{\overline{k}}_{i-1}(s) &\eqdef  P(S_i=s|Y^{i-1}_{t+1},V^{i-1}_{t+1},y^t,v^t,s_1,W \neq k).
\end{align}
Moreover, $B^{\overline{k}}_i$ can be updated by
\begin{align}
B^{\overline{k}}_i(s) &= \frac{\sum_{\tilde{x},\tilde{s}} \delta_{g(\tilde{s},\tilde{x},Y_t)}(s)Q(Y_t|\tilde{x},\tilde{s})X^{\overline{k}}_i(\tilde{x}|\tilde{s})B^{\overline{k}}_{i-1}(\tilde{s})}{\sum_{\tilde{x},\tilde{s}} Q(Y_t|\tilde{x},\tilde{s})X^{\overline{k}}_i(\tilde{x}|\tilde{s})B^{\overline{k}}_{i-1}(\tilde{s})},
\end{align}
which we can concisely express as $B^{\overline{k}}_i = \phi(B^{\overline{k}}_{i-1},X^{\overline{k}}_i,Y_i)$.
With the above derivation, the divergence in~\eqref{eq:divergence} can be expressed as
\begin{align} \label{eq:divergence1}
D&(P(Y^{t+N}_{t+1},V^{t+N}_{t+1}|W=k,y^t,v^t,s_1)||P(Y^{t+N}_{t+1},V^{t+N}_{t+1}|W\neq k,y^t,v^t,s_1)) \nonumber \\
&= \sum_{i=t+1}^{t+N}\EE[ \log \frac{Q(Y_i|S^k_i,X^k_i) }{\sum_{x,s} Q(Y_{i}|x,s) X^{\overline{k}}_i(x|s) B^{\overline{k}}_{i-1}(s)} \nonumber \\
& \hspace*{6cm} |y^t,v^t,s_1,W=k] \nonumber \\
&= \sum_{i=t+1}^{t+N}\EE[ \EE[\log \frac{Q(Y_i|S^k_i,X^k_i) }{\sum_{x,s} Q(Y_{i}|x,s) X^{\overline{k}}_i(x|s) B^{\overline{k}}_{i-1}(s)} \nonumber \\
&\hspace*{1cm} |S^k_i,B^{\overline{k}}_{i-1},X^k_i,X^{\overline{k}}_i,y^t,v^t,s_1,W=k] |y^t,v^t,s_1,W=k]\nonumber \\
&= \sum_{i=t+1}^{t+N}\EE[ R(S^k_i,B^{\overline{k}}_{i-1},X^k_i,X^{\overline{k}}_i)|y^t,v^t,s_1,W=k],
\end{align}
where the function $R(s^0,b,x^0,x^1)$ is given by
\begin{align}
 R(s^0,b^1,x^0,x^1) &= \sum_y Q(y|s^0,x^0)  \nonumber \\
&\qquad \log \frac{Q(y|s^0,x^0) }{\sum_{\tilde{x},\tilde{s}} Q(y|\tilde{x},\tilde{s}) x^{1}(\tilde{x}|\tilde{s}) b^1(\tilde{s})} .
\end{align}
This inspires us to define a controlled Markov process with state $(S^0_t,B^1_{t-1})\in \cS \times \cP(\cS)$, action $(X^0_t,X^1_t) \in \cX \times (\cS \rightarrow \cP(\cX))$, instantaneous reward $R(S^0_t,B^1_{t-1},X^0_t,X^1_t)$ at time $t$ and transition kernel
\begin{align}
Q'&(S^0_{t+1},B^1_{t}|S^0_{t},B^1_{t-1},X^0_{t},X^1_{t})  \nonumber \\
 &= \sum_y \delta_{g(S^0_{t},X^0_t,y)}(S^0_{t+1})\delta_{\phi(B^1_{t-1},X^1_t,y)}(B^1_{t}) Q(y|X^0_t,S^0_{t}).
\end{align}
That this is indeed a controlled Markov process can be readily established.
Note that at time $t=0$ the process starts with initial state $(S^0_0,B^1_{-1})$.
Let $V^N(s^0,b^1)$ be the (average) reward in $N$ steps of this process
\begin{align}
V^N&(s^0,b^1) \nonumber \\
&\eqdef   \frac{1}{N} \EE[\sum_{i=1}^{N}R(S^0_i,B^1_{i-1},X^0_i,X^1_i)|S^0_0=s^0,B^1_{-1}=b^1],
\end{align}
and denote by $V^{\infty}(s^0,b^1)$ the corresponding $\limsup$, i.e.,
$V^{\infty}(s^0,b^1) =\limsup_{N\rightarrow\infty}V^N(s^0,b^1)$.
Then, the constant $C_1$ is given by
\begin{align}
C_1 &=  \sup_{s^0,b^1} V^{\infty}(s^0,b^1).
\end{align}

%%%%%%%%%%%%%%%%%%%%%%%%%%%%%%%%%%%%%%%%%%%%%%%%%%%%%%%%%%
\subsection{A computational efficient upper bound on $C_1$ }
\label{sec:simpleC_1}
The MDP defined above has uncountably infinite state and action spaces. In this section, we propose an alternative upper bound on $C_1$ and formulate an MDP with finite state and action spaces to evaluate it. This provides a looser but more computational efficient upper bound. As it turns out, there are several instances of interest where this upper bound can be achieved~\cite{AnWu17}.
Consider again the divergence term
\begin{align}
D&( P(Y^{t+N}_{t+1},V^{t+N}_{t+1}|W=k,y^t,v^t,s_1) ||  P(Y^{t+N}_{t+1},V^{t+N}_{t+1}|W\neq k,y^t,v^t,s_1) ) \\
&=D( P(Y^{t+N}_{t+1},V^{t+N}_{t+1}|W=k,y^t,v^t,s_1) || \nonumber \\
&\qquad  \sum_{j\neq k}\frac{P(W=j|y^t,v^t,s_1)}{1-P(W=k|y^t,v^t,s_1)} P(Y^{t+N}_{t+1},V^{t+N}_{t+1}|W=j,y^t,v^t,s_1) ) \\
&\overset{(a)}{\leq}
\sum_{j\neq k}\frac{P(W=j|y^t,v^t,s_1)}{1-P(W=k|y^t,v^t,s_1)} \nonumber \\
&\qquad
D( P(Y^{t+N}_{t+1},V^{t+N}_{t+1}|W=k,y^t,v^t,s_1) || P(Y^{t+N}_{t+1},V^{t+N}_{t+1}|W=j,y^t,v^t,s_1) ) \\
&\leq \max_{j\neq k}
D( P(Y^{t+N}_{t+1},V^{t+N}_{t+1}|W=k,y^t,v^t,s_1) ||   P(Y^{t+N}_{t+1},V^{t+N}_{t+1}|W=j,y^t,v^t,s_1) )
\end{align}
where (a) is due to convexity.
Now look into the first distribution in the divergence,
\begin{align}
P(Y^{t+N}_{t+1},V^{t+N}_{t+1}|W=k,y^t,v^t,s_1) &= \prod_{i=1}^{N} P(Y_{t+i}|W=k,Y^{t+i-1}_{t+1},V^{t+i}_{t+1},y^t,v^t,s_1) P(V_{t+i}|W=k,Y^{t+i-1}_{t+1},V^{t+i-1}_{t+1},y^t,v^t,s_1)  \nonumber \\
&= \prod_{i=1}^{N} Q(Y_{t+i}|X^k_{t+i},S^k_{t+i})P_V(V_{t+i}).
\end{align}
%where $X^k_{t+i}=e_{t+i}(k,Y^{t+i-1},S_1)$ and $S^k_{t+i}=g_{t+i}(k,Y^{t+i-1},S_1)$ represent the input and the state at time $t+i$ conditioning on $W=k$, respectively.
Then we have
\begin{align}
D&(P(Y^{t+N}_{t+1},V^{t+N}_{t+1}|W=k,y^t,v^t,s_1)||P(Y^{t+N}_{t+1},V^{t+N}_{t+1}|W=j,y^t,v^t,s_1)) \nonumber \\
 &= \sum_{i=1}^{N} \EE[\EE[\log\frac{Q(Y_{t+i}|X^k_{t+i},S^k_{t+i})}{Q(Y_{t+i}|X^j_{t+i},S^j_{t+i})} \nonumber \\
&\qquad  \qquad |Y^{t+i-1}_{t+1},V^{t+i}_{t+1},y^t,v^t,s_1,W=k]|y^t,v^t,s_1,W=k] \nonumber \\
 &= \sum_{i=1}^{N} \EE[\tilde{R}(S^k_{t+i},S^j_{t+i},X^k_{t+i},X^j_{t+i})|y^t,v^t,s_1,W=k],
\end{align}
where $\tilde{R}$ is defined by
\begin{align}
\tilde{R}(s^0,s^1,x^0,x^1) = \sum_y Q(y|x^0,s^0)\log\frac{ Q(y|x^0,s^0)}{ Q(y|x^1,s^1)}.
\end{align}
Similar to the previous development, we define a controlled Markov chain with state $(S^0_t,S^1_t)\in \cS^2$, action $(X^0_t,X^1_t) \in \cX^2$ , instantaneous reward $\tilde{R}(S^0_t,S^1_t,X^0_t,X^1_t)$ at time $t$ and transition kernel
\begin{align}
\tilde{Q}'&(S^0_{t+1},S^1_{t+1}|S^0_{t},S^1_{t},X^0_{t},X^1_{t}) \nonumber \\
& = \sum_y \delta_{g(S^0_{t},X^0_t,y)}(S^0_{t+1})\delta_{g(S^1_{t},X^1_t,y)}(S^1_{t+1}) Q(y|X^0_t,S^0_{t}).
\end{align}
Let $\tilde{V}^N(s^0,s^1)$ denote the average $N$-stage reward for this MDP, i.e.,
\begin{align}
\tilde{V}^N(s^0,s^1) &\eqdef \frac{1}{N} \EE[\sum_{i=1}^{N}R(S^0_i,S^1_i,X^0_i,X^1_i)|S^0_0=s^0,S^1_0=s^1].
\end{align}
Combining the above with the definition of $C_1$, we have
\begin{align}
C_1 &\leq  \max_{s^0,s^1} \tilde{V}^{\infty}(s^0,s^1).
\end{align}
which gives an easier to evaluate upper bound on $C_1$.

%%%%%%%%%%%%%%%%%%%%%%%%%%%%%%%%%%%%%%%%%%%%%%%%%%%%%%%%%%
%%%%%%%%%%%%%%%%%%%%%%%%%%%%%%%%%%%%%%%%%%%%%%%%%%%%%%%%%%
%%%%%%%%%%%%%%%%%%%%%%%%%%%%%%%%%%%%%%%%%%%%%%%%%%%%%%%%%%
\section{Numerical Result for unifilar channels}
\label{sec:example}
In this section, we provide numerical results for the expressions $V^{\infty}$ and $\tilde{V}^{\infty}$ for some binary input/output/state unifilar channels.
We consider the trapdoor channel (denoted as channel $A$), chemical channel (denoted as channel $B(p_0)$), symmetric unifilar channels (denoted as channel $C(p_0,q_0)$), and asymmetric unifilar channels (denoted as channel $C(p_0,q_0,p_1,q_1)$). All of these channels have $g(s,x,y) = s\oplus x \oplus y$ and kernel $Q$ characterized as shown in Table~\ref{t:Q}.

\begin{table}[h]
\centering
\caption{Kernel definition for binary unifilar channels}
\begin{tabular}{|c|c|c|c|c|}
\hline
Channel & $Q(0| 0,0)$& $Q(0|1,0)$ & $Q(0|0,1)$&  $Q(0|1,1)$\\
\hline
A & 1 & 0.5 & 0.5 & 0\\
\hline
B($p_0$) & $1$ & $p_0$ & $1-p_0$ & 0\\
\hline
C($p_0,q_0$) & $1-q_0$ & $p_0$ & $1-p_0$ & $q_0$\\
\hline
D($p_0,q_0,p_1,q_1$) & $1-q_0$ & $p_0$ & $1-p_1$ & $q_1$\\
\hline
\end{tabular}
\label{t:Q}
\end{table}

The numerical results are shown in the following table and were obtained by numerically solving the corresponding MDPs. The results for $V^{\infty}$ were obtained by quantizing the state and input spaces using uniform quantization with $n=100$ points. The results are tabulated in Table~\ref{t:R}.

\begin{table}[h]

\centering
\caption{Asymptotic reward per unit time}

%\begin{tabular}{|c|c|c|c|c|c|}
%\hline
%Channel & $\inf V^{\infty}(s_0,b)$ & $\sup V^{\infty}(s_0,b)$   \\
%\hline
%A & $\infty$ & $\infty$  \\
%\hline
%B($0.9$)  & $\infty$ & $\infty$  \\
%\hline
%C($0.5,0.1$)& 1.633 & 1.637    \\
%\hline
%D($0.5,0.1,0.1,0.1$)& 2.274 & 2.303  \\
%\hline
%\end{tabular}

%\begin{tabular}{|c|c|c|c|c|c|}
%\hline
%Channel  & $\min \widetilde{V}^{\infty}(s_0,s_1)$ & $\max \widetilde{V}^{\infty}(s_0,s_1)$    \\
%\hline
%A  & $\infty$ & $\infty$ \\
%\hline
%B($0.9$)   &  $\infty$ & $\infty$ \\
%\hline
%C($0.5,0.1$)& 1.637 & 1.637   \\
%\hline
%D($0.5,0.1,0.1,0.1$) & 2.298 &  2.298 \\
%\hline
%\end{tabular}

%\vspace*{0.1cm}
%\begin{tabular}{|c|c|c|c|c|c|c|}
%\hline
%$\min \widetilde{V}^{\infty}(s_0,s_1)$ & $\max \widetilde{V}^{\infty}(s_0,s_1)$  &
%$C_1$ & $C_1^*$   \\
%\hline
%$\infty$ & $\infty$ & $\infty$& $\infty$ \\
%\hline
%$\infty$ & $\infty$ & $\infty$ & $3.294$\\
%\hline
%1.637 & 1.637 & 2.303 & 2.247  \\
%\hline
%2.298 &  2.298 & 2.536 & 2.459 \\
%\hline
%\end{tabular}

\begin{tabular}{|c|c|c|c|c|c|c|c|}
\hline
Channel & $\inf_{s^0,b^1}V^{\infty}(s^0,b^1)$ & $\sup_{s^0,b^1}V^{\infty}(s^0,b^1)$ & $\min_{s^0,s^1}\widetilde{V}^{\infty}(s^0,s^1)$ & $\max_{s^0,s^1}\widetilde{V}^{\infty}(s^0,s^1)$ &
$C_1$ & $C_1^*$   \\
\hline
A & $\infty$ & $\infty$ & $\infty$ & $\infty$ & $\infty$ & $\infty$ \\
\hline
B($0.9$)  & $\infty$ & $\infty$ &  $\infty$ & $\infty$ & $\infty$ & $3.294$ \\
\hline
C($0.5,0.1$)&  1.633 & 1.637   & 1.637 & 1.637  & 1.637 & 1.533  \\
\hline
C($0.9,0.1$)&  2.459 & 2.536  & 2.533 & 2.536  & 2.536 & 2.459  \\
\hline
D($0.5,0.1,0.1,0.1$)& 2.274 &  2.303 & 2.298 &  2.298 & 2.303 & 2.247  \\
\hline
D($0.9,0.1,0.1,0.1$)& 2.459  & 2.536  & 2.533 &   2.536  & 2.536 & 2.459  \\
\hline
\end{tabular}

\label{t:R}
\end{table}

It is not surprising that the trapdoor and chemical channels have infinite upper bounds.
This is also true for the Z channel in the DMC case and it is related to the fact that the transition kernel has a zero entry. Intuitively, discrimination of the two hypotheses can be perfect by
transmitting always $X_t=1\oplus S_t$ under $H_0$ and $X_t=S_t$ under $H_1$ hypothesis: with high probability, that does not depend on the message size or the target error rate, the receiver under the $H_0$ hypothesis will receive the output $Y_t=1\oplus S_t$ which is impossible under $H_1$ hypothesis and thus will make a perfect decision.

For each MDP the rewards do not seem to depend on the initial state, within the accuracy of our calculations.
Similarly, the results comparing the first and second MDPs are within the accuracy of our calculations and so we cannot make a conclusive statement regarding the difference between the two MDP solutions. There is a strong indication, however, that they both result in the same average reward asymptotically.

Also shown in the above table is the quantity $C_1^*$ which is the average reward received in the MDP for the instantaneous reward
\begin{align}
R^*(s^0,b^1,x^0,x^1) = \sum_{\tilde{x},\tilde{s}} x^1(\tilde{x}|\tilde{s}) b^1(\tilde{s})\sum_y Q(y|\tilde{x},\tilde{s})\log \frac{Q(y|\tilde{x},\tilde{s})}{Q(y|x^0,s^0)}, \nonumber
\end{align}
which is of interest in the design of transmission schemes in~\cite{AnWu17}.

%%%%%%%%%%%%%%%%%%%%%%%%%%%%%%%%%%%%%%%%%%%%%%%%%%%%%%%%%%
%%%%%%%%%%%%%%%%%%%%%%%%%%%%%%%%%%%%%%%%%%%%%%%%%%%%%%%%%%
%%%%%%%%%%%%%%%%%%%%%%%%%%%%%%%%%%%%%%%%%%%%%%%%%%%%%%%%%%
\section{Conclusions}
\label{sec:conclusions}
In this paper, we derive an upper bound on the error-exponent of unifilar channels with noiseless feedback and variable length codes. We generalize Burnashev's techniques by performing multi-step drift analysis and deriving a lower bound on the stopping time together with a proposed submartingale.
The constant $C_1$ which is the zero rate exponent is evaluated through an MDP and furher upper bounded through a more computationally tractable MDP. Numerical results show that for all unifilar channels tested, the two MDPs give similar results.
A future research direction is the analytical solution of these MDPs.
In addition, the presented analysis can be easily generalized to channels with finite state and inter-symbol interference (ISI) with the state known only to the receiver.

%%%%%%%%%%%%%%%%%%%%%%%%%%%%%%%%%%%%%%%%%%%%%%%%%%%%%%%%%%
%%%%%%%%%%%%%%%%%%%%%%%%%%%%%%%%%%%%%%%%%%%%%%%%%%%%%%%%%%
%%%%%%%%%%%%%%%%%%%%%%%%%%%%%%%%%%%%%%%%%%%%%%%%%%%%%%%%%%
%%%%%%%%%%%%%%%%%%%%%%%%%%%%%%%%%%%%%%%%%%%%%%%%%%%%%%%%%%
%%%%%%%%%%%%%%%%%%%%%%%%%%%%%%%%%%%%%%%%%%%%%%%%%%%%%%%%%%
%%%%%%%%%%%%%%%%%%%%%%%%%%%%%%%%%%%%%%%%%%%%%%%%%%%%%%%%%%
\appendices

\section{Proof of Lemma~\ref{lemma:genFano}}\label{app:genFano}
We will first establish that under the condition $P(T<\infty)=1$ the limit
$\lim_{n\rightarrow \infty} \EE[H_{T\wedge n}]$ exists, where $T\wedge n=\min\{T,n\}$.
We have
\begin{equation}
\EE[H_{T\wedge n}] = \sum_{t=1}^n \EE[H_t|T=t]P(T=t) + \EE[H_n|T>n]P(T>n).
\end{equation}
Take $m<n$ and using the fact that $0 \leq H_T \leq \log M$ a.s., we have
\begin{subequations}
\begin{align}
|\EE[H_{T\wedge n}]-\EE[H_{T\wedge m}]|
 &= \EE[H_n|T>n]P(T>n) + \EE[H_m|T>m]P(T>m) + \sum_{t=m+1}^n \EE[H_t|T=t]P(T=t) \\
 & \leq (P(T>n)+P(T>m)+\sum_{t=m+1}^n P(T=t))\log M \\
 & = 2 P(T>m) \log M \\
 & \stackrel{m\rightarrow \infty}{\longrightarrow} 0.
\end{align}
\end{subequations}
Defining the event $\cE = \{W\neq \hat{W}\}$ we have from Fano's inequality
\begin{subequations}
\begin{align}
H(W|\hat{W},T=n) &\leq h(P(\cE|T=n)) + P(\cE|T=n) \log(M-1) \Leftrightarrow  \\
\sum_{j=1}^M H(W|\hat{W}=j,T=n) P(\hat{W}=j|T=n) &\leq h(P(\cE|T=n)) + P(\cE|T=n) \log(M-1)
\label{eq:h_ineq2}
\end{align}
\end{subequations}
Now consider the probability $P(W=i|\hat{W}=j,T=n)$
\begin{subequations}
\begin{align}
P(W=i|\hat{W}=j,T=n) & = \sum_{y^n,v^n} P(W=i|\hat{W}=j,T=n,Y^n=y^n,V^n=v^n,S_1=s_1)P(Y^n=y^n,V^n=v^n,S_1=s_1|\hat{W}=j,T=n) \\
 &\stackrel{(a)}{=} \sum_{y^n,v^n} P(W=i|Y^n=y^n,V^n=v^n,S_1=s_1) P(Y^n=y^n,V^n=v^n,S_1=s_1|\hat{W}=j,T=n) \\
 &\stackrel{(b)}{=} \EE[ \Pi_n(i) | \hat{W}=j,T=n],
\end{align}
\end{subequations}
where (a) is due to $1_{T=n\text{ and } \hat{W}=j}=1_{T=n\text{ and } \hat{W}_n=j}$ being measurable wrt $\cF_n$,
and (b) is due to the definition of the rv $\Pi_n$.
Using concavity of entropy and the definition of the rv $H_n$ we now have
\begin{align}\label{eq:h_ineq1}
 \EE[ H_n | \hat{W}=j,T=n] \leq H(W|\hat{W}=j,T=n).
\end{align}
We can now write
\begin{subequations}
\begin{align}
\EE[ H_n | T=n]
 &= \sum_{j=1}^M \EE[ H_n | \hat{W}=j,T=n] P(\hat{W}=j|T=n) \\
 &\stackrel{(a)}{\leq}  \sum_{j=1}^M H(W|\hat{W}=j,T=n) P(\hat{W}=j|T=n) \\
 &\stackrel{(b)}{\leq} h(P(\cE|T=n)) + P(\cE|T=n) \log(M-1),
\end{align}
\label{eq:h_ineq3}
\end{subequations}
where (a) is due to~\eqref{eq:h_ineq1} and (b) is due to~\eqref{eq:h_ineq2}.
Averaging out wrt $T$ and using the fact that the limit $\lim_{n\rightarrow \infty} \EE[H_{T\wedge n}]$ exists,  results in
\begin{subequations}
\begin{align}
\EE[ H_T ]
 &= \sum_{n=1}^\infty \EE[H_n|T=n] P(T=n) \\
 &\stackrel{(a)}{\leq}  \sum_{n=1}^\infty h(P(\cE|T=n))P(T=n) + P(\cE|T=n)P(T=n) \log(M-1) \\
 &\stackrel{(b)}{\leq}  h(\sum_{n=1}^\infty  P(\cE|T=n)P(T=n)) + P(\cE) \log(M-1) \\
 &=  h(P_e) + P_e \log(M-1),
\end{align}
\end{subequations}
where (a) is due to~\eqref{eq:h_ineq3} and (b) is due to the concavity of the binary entropy function $h(\cdot)$.

\section{Proof of Lemma~\ref{lemma:driftentropy}}\label{app:lemma1}

%\begin{IEEEproof}
Given any $y^{t}\in \cY^{t}$, $v^{t}\in \cV^{t}$ and $s_1\in \cS$,
\begin{align} \label{eq:onedrift}
\EE&[H_{t+1} - H_{t}|Y^{t}=y^{t},V^t=v^t,S_1=s_1] \nonumber\\
 &= -I(W;Y_{t+1},V_{t+1}|Y^{t}=y^{t},V^t=v^t,S_1=s_1)\nonumber \\
 &= -H(Y_{t+1}|V_{t+1},Y^{t}=y^{t},V^t=v^t,S_1=s_1)-H(V_{t+1}|Y^{t}=y^{t},V^t=v^t,S_1=s_1) \nonumber \\
 & \qquad +  H(Y_{t+1}|V_{t+1},Y^{t}=y^{t},V^t=v^t,S_1=s_1,W) + H(V_{t+1}|Y^{t}=y^{t},V^t=v^t,S_1=s_1,W) \nonumber \\
 &\overset{(a)}{=} -H(Y_{t+1}|V_{t+1},Y^{t}=y^{t},V^t=v^t,S_1=s_1)-H(V_{t+1}) \nonumber \\
 & \qquad +  H(Y_{t+1}|V_{t+1},Y^{t}=y^{t},V^t=v^t,S_1=s_1,W) + H(V_{t+1}) \nonumber \\
 &\overset{(b)}{\geq} -H(Y_{t+1}|Y^{t}=y^{t},V^t=v^t,S_1=s_1) +  H(Y_{t+1}|V_{t+1},Y^{t}=y^{t},V^t=v^t,S_1=s_1,W) \nonumber \\
 &\overset{(c)}{=} -H(Y_{t+1}|Y^{t}=y^{t},V^t=v^t,S_1=s_1) +  H(Y_{t+1}|V_{t+1},Y^{t}=y^{t},V^t=v^t,S_1=s_1,W,S_{t+1},X_{t+1}) \nonumber \\
 &\overset{(d)}{=} -H(Y_{t+1}|Y^{t}=y^{t},V^t=v^t,S_1=s_1)
  +  H(Y_{t+1}|S_{t+1},X_{t+1}, Y^{t}=y^{t},V^t=v^t,S_1=s_1) \nonumber \\
 &= -I(X_{t+1},S_{t+1};Y_{t+1}|Y^{t}=y^{t},V^t=v^t,S_1=s_1),
\end{align}
where (a) is due to the way the common random variables are selected, (b) is due to conditioning reduces entropy, and (c) is due to encoding and the deterministic channel state update (d) due to the channel properties.
Note that the last term is the mutual information between $X_{t+1},S_{t+1}$ and $Y_{t+1}$ conditioning on $Y^{t}=y^{t},V^t=v^t,S_1=s_1$, which is different from conditional mutual information $I(X_{t+1},S_{t+1};Y_{t+1}|Y^{t},V^t,S_1)$.
Now the $N$-step drift becomes
\begin{align}
\EE&[H_{t+N} - H_{t}|Y^t=y^t,V^t=v^t,S_1=s_1] \nonumber \\
 &= \sum_{k=t}^{t+N-1}  \EE[ \EE[H_{k+1} - H_{k}|Y^t=y^t,V^t=v^t, Y_{t+1}^k,V_{t+1}^k,S_1=s_1]|Y^t=y^t,V^t=v^t,S_1=s_1] \nonumber \\
 &= \sum_{k=t}^{t+N-1}  \sum_{y^k_{t+1},v^k_{t+1}} P(Y^k_{t+1}=y^k_{t+1},V^k_{t+1}=v^k_{t+1}|Y^t=y^t,V^t=v^t,S_1=s_1) \EE[H_{k+1} - H_{k}|Y^k=y^k,V^k=v^k,S_1=s_1] \nonumber \\
 &\overset{(a)}{\geq} -\sum_{k=t}^{t+N-1} \sum_{y^k_{t+1},v^k_{t+1}} P(Y^k_{t+1}=y^k_{t+1},V^k_{t+1}=v^k_{t+1}|Y^t=y^t,V^t=v^t,S_1=s_1) I(X_{k+1},S_{k+1} ;Y_{k+1}|Y^{k}=y^k,V^{k}=v^k,S_1=s_1)\nonumber \\
 %&= -\sum_{k=t}^{t+N-1} \sum_{y^k_{t+1}} P(Y^k_{t+1}=y^k_{t+1}|Y^t=y^t,V^t=v^t,S_1=s_1) I(X_{k+1},S_{k+1} ;Y_{k+1}|Y^{k}=y^k,S_1=s_1)\nonumber \\
 &= -\sum_{k=t}^{t+N-1} I(X_{k+1},S_{k+1} ;Y_{k+1}|Y^{k}_{t+1},V^{k}_{t+1},Y^{t}=y^t,V^{t}=v^t,S_1=s_1)\nonumber \\
 &\overset{(b)}{\geq} -N(C+\epsilon),
\end{align}
where (a) is due to~\eqref{eq:onedrift} and (b) is due to~\eqref{eq:capacity}.
%\end{IEEEproof}

\section{Proof of Lemma~\ref{lemma:driftlogentropy}}\label{app:lemma2}

%\begin{IEEEproof}
Given any $y^t \in \cY ^t$, $v^t \in \cV ^t$ and $s_1\in \cS $,
\begin{align}
E&[\log(H_{t+N})-\log(H_t)|Y^t=y^t,V^t=v^t,S_1=s_1] =\nonumber \\
& \EE[\log\frac{-\sum_{i}P(W=i|Y^{t+N}_{t+1},V^{t+N}_{t+1},y^t,v^t,s_1)
 \log P(W=i|Y^{t+N}_{t+1},V^{t+N}_{t+1},y^t,v^t,s_1)}{-\sum_{i}P(W=i|y^{t},v^t,s_1)
 \log P(W=i|y^t,v^t,s_1)}|Y^t=y^t,V^t=v^t,S_1=s_1].
\end{align}
For convenience, we define the following quantities
\begin{subequations}
\begin{align}
f_i &= P(W=i|y^t,v^t,s_1)  \\
f_i(Y^{t+N}_{t+1},V^{t+N}_{t+1}) &= P(W=i|Y^{t+N}_{t+1},V^{t+N}_{t+1},y^t,v^t,s_1) \\
%\hat{Q}(Y^{t+N}_{t+1}|i) &= P(Y^{t+N}_{t+1}|W=i,Y^{t+N}_{t+1},y^t,s_1).
\hat{Q}(Y^{t+N}_{t+1},V^{t+N}_{t+1}|i) &= P(Y^{t+N}_{t+1},V^{t+N}_{t+1}|W=i,y^t,v^t,s_1).
\end{align}
\end{subequations}
Since  $H_t < \epsilon$, there exits a $k$ such that $f_k>1-\epsilon/2$ while $f_j<\epsilon/2$ for $j\neq k$. We further define $\hat{f}_j \triangleq f_j/(1-f_k) $ for  $j\neq k$. The following approximations are valid for $f_k$ close to 1.
\begin{subequations}
\begin{align}
 f_k(Y^{t+N}_{t+1},V^{t+N}_{t+1}) \log f_k(Y^{t+N}_{t+1},V^{t+N}_{t+1}) &= -(1-f_k) \frac{\sum_{j\neq k} \hat{f}_j \hat{Q}(Y^{t+N}_{t+1},V^{t+N}_{t+1}|j)}{\hat{Q}(Y^{t+N}_{t+1},V^{t+N}_{t+1}|k)}    + o(1-f_k) \\
 f_j(Y^{t+N}_{t+1},V^{t+N}_{t+1}) \log f_j(Y^{t+N}_{t+1},V^{t+N}_{t+1}) &= (1-f_k) (\log(1-f_k) + o(\log(1-f_k)))\frac{\hat{f}_j\hat{Q}(Y^{t+N}_{t+1},V^{t+N}_{t+1}|j)}{\hat{Q}(Y^{t+N}_{t+1},V^{t+N}_{t+1}|k)} \\
 P(Y^{t+N}_{t+1},V^{t+N}_{t+1}|y^t,S_1) &= \hat{Q}(Y^{t+N}_{t+1},V^{t+N}_{t+1}|k)
 %+ o(\frac{1}{1-f_k}). % shouldn't this be o(1)?
 +o(1).
\end{align}
\end{subequations}
Substituting these approximate expressions back to the drift expression we have
\begin{align} \label{ineq:logHtlowerboundprimitive}
E&[\log (H_{t+N})-\log (H_t)|Y^t=y^t,V^t=v^t,S_1=s_1] \nonumber \\
 &= \sum_{Y^{t+N}_{t+1},V^{t+N}_{t+1}} P(Y^{t+N}_{t+1},V^{t+N}_{t+1}|y^t,v^t,s_1) \log \frac{\sum_i f_i(Y^{t+N}_{t+1},V^{t+N}_{t+1})\log f_i(Y^{t+N}_{t+1},V^{t+N}_{t+1})}{\sum_i f_i\log f_i}  \nonumber \\
 &=\sum_{Y^{t+N}_{t+1},V^{t+N}_{t+1}} \hat{Q}(Y^{t+N}_{t+1},V^{t+N}_{t+1}|k) \log \frac{(1-f_k) (\log(1-f_k) + o(\log(1-f_k)))\sum_{j\neq k}\frac{\hat{f}_j\hat{Q}(Y^{t+N}_{t+1},V^{t+N}_{t+1}|j)}{\hat{Q}(Y^{t+N}_{t+1},V^{t+N}_{t+1}|k)}}{(1-f_k)(\log (1-f_k) + o(\log(1-f_k))}   \nonumber \\
 &= -\sum_{Y^{t+N}_{t+1},V^{t+N}_{t+1}}\hat{Q}(Y^{t+N}_{t+1},V^{t+N}_{t+1}|k) \log \frac{\hat{Q}(Y^{t+N}_{t+1},V^{t+N}_{t+1}|k)}{\sum_{j\neq k}\hat{f}_j\hat{Q}(Y^{t+N}_{t+1},V^{t+N}_{t+1}|j)} +o(1)\nonumber \\
 &\geq -N(C_1 + \epsilon),
\end{align}
where the last inequality is due to the definition of $C_1$.

%\end{IEEEproof}

\section{Proof of Lemma~\ref{lemma:newsubmartingale}}\label{app:lemma3}

%\begin{IEEEproof}
We can always choose a sufficiently small positive $\lambda$ such that
\begin{subequations}
\begin{align}
\frac{H^*}{K_1}(e^y-1) &< \frac{y}{K_2} + f(y) &\qquad -K_3 < y < 0 \label{ineq:subnegy} \\
\frac{H^*}{K_1}(e^y-1) &> \frac{y}{K_2} + f(y) &\qquad 0 < y < K_3  \label{ineq:subposy} \\
\frac{1}{K_2} + f'(y) &> 0  &\qquad -K_3 < y < 0. \label{ineq:subdiff} \\
1-\frac{\lambda e^{\lambda K_3}}{2K_2}K_3^2    &> 0 & \label{ineq:subsub}
\end{align}
\end{subequations}

We first consider the case $H_t > H^*$.
\begin{align}
Z_{t+1} &= (\frac{H_{t+1}-H^*}{K_1}+t+1)1_{\{H_{t+1}>H^*\}}+(\frac{\log \frac{H_{t+1}}{H^*}}{K_2}+t+1+f(\log \frac{H_t}{H^*}))1_{\{H_{t+1}\leq H^*\}}   \nonumber\\
&\overset{(a)}{\geq} (\frac{H_{t+1}-H^*}{K_1}+t+1)1_{\{H_{t+1}>H^*\}}+(\frac{H_{t+1}-H^*}{K_1}+t+1)1_{\{H_{t+1}\leq H^*\}}   \nonumber\\
& = \frac{H_{t+1}-H^*}{K_1}+t+1,
\end{align}
where (a) is due to \eqref{ineq:subnegy}. Therefore we have
\begin{align}
\EE[Z_{t+1}-Z_t|\mathcal{F}_t] &= \EE[Z_{t+1}1_{\{H_{t} > H^*\}}-Z_t1_{\{H_{t} > H^*\}}|\mathcal{F}_t] \nonumber\\
&\geq \EE[(\frac{H_{t+1}-H^*}{K_1}+t+1)1_{\{H_{t} > H^*\}}-(\frac{H_{t}-H^*}{K_1}+t)1_{\{H_{t} > H^*\}}|\mathcal{F}_t] \nonumber \\
&\geq 0, \label{ineq:subHtbig}
\end{align}
where the last equation is due to \eqref{ineq:entropy}.
Similarly, for the case $H_t \leq H^*$, from \eqref{ineq:subposy} we have
\begin{equation}
Z_{t+1} \geq (\frac{\log \frac{H_{t+1}}{H^*}}{K_2}+t+1+f(\log \frac{H_{t+1}}{H^*})),
\end{equation}
and therefore
\begin{align}
E&[Z_{t+1}-Z_t|\mathcal{F}_t] \nonumber \\
 &\geq \EE[\frac{\log \frac{H_{t+1}}{H^*}}{K_2}+t+1+f(\log \frac{H_{t+1}}{H^*})-\frac{\log \frac{H_{t}}{H^*}}{K_2}-t-f(\log \frac{H_{t}}{H^*})|\mathcal{F}_t]\nonumber \\
&\overset{(a)}{=} \EE[(\frac{1}{K_2}+f'(\log\frac{H_{t+1}}{H^*}))(\log\frac{H_{t+1}}{H^*}-\log\frac{H_{t}}{H^*})+1+\frac{f''(Z(H_{t+1},H_t))}{2}(\log\frac{H_{t+1}}{H^*}-\log\frac{H_{t}}{H^*})^2|\mathcal{F}_t]\nonumber \\
&\overset{(b)}{\geq} \EE[-K_2f'(\log\frac{H_{t+1}}{H^*})+\frac{f''(Z(H_{t+1},H_t))}{2}(\log\frac{H_{t+1}}{H^*}-\log\frac{H_{t}}{H^*})^2|\mathcal{F}_t]  \nonumber \\
&= \EE[e^{\lambda\log\frac{H_{t+1}}{H^*}}+\frac{-\lambda e^{\lambda (Z(H_{t+1},H_t)-\log\frac{H_{t+1}}{H^*}+\log\frac{H_{t+1}}{H^*})}}{2K_2}(\log\frac{H_{t+1}}{H^*}-\log\frac{H_{t}}{H^*})^2|\mathcal{F}_t] \nonumber \\
&\overset{(c)}{\geq}  \EE[e^{\lambda\log\frac{H_{t+1}}{H^*}}-\frac{\lambda e^{\lambda (K_3+\log\frac{H_{t+1}}{H^*})}}{2K_2}(\log\frac{H_{t+1}}{H^*}-\log\frac{H_{t}}{H^*})^2|\mathcal{F}_t] \nonumber \\
& \overset{(d)}{\geq} (1-\frac{\lambda e^{\lambda K_3}}{2K_2}K_3^2) \EE[e^{\lambda\log\frac{H_{t+1}}{H^*}}|\mathcal{F}_t] \nonumber \\
& \overset{(e)}{\geq}\ 0,\label{ineq:subHtsmall}
\end{align}
where (a) is from the second-order Taylor's expansion of $f$ at $\log\frac{H_t}{H^*}$, (b) is due to  \eqref{ineq:logentropysmall} and \eqref{ineq:subdiff}, (c) is due to that $Z(H_{t+1},H_t)$ is between $\log\frac{H_t}{H^*}$ and $\log\frac{H_{t+1}}{H^*}$, (d) is due to \eqref{ineq:logentropyall}, and (e) is due to \eqref{ineq:subsub}.
From \eqref{ineq:subHtbig} and \eqref{ineq:subHtsmall}, we have $\EE[Z_{t+1}-Z_t|Y^t]\geq 0$ and thus $(Z_t)_{t\geq 0}$ is a submartingale.
%\end{IEEEproof}

\section{Proof of Theorem~\ref{th:main}}\label{app:proposition}

%\begin{IEEEproof}
The proof essentially applies Lemma~\ref{lemma:newsubmartingale} to the ``block'' submartingale.
Given any $\epsilon>0$, there exists an $N=N(\epsilon)$ such that by Lemma~\ref{lemma:driftentropy} and Lemma~\ref{lemma:driftlogentropy},
\begin{subequations}
\begin{align}
\EE[H_{N(t+1)}-H_{Nt}|\cF_{Nt}] &\geq  -N(C+\epsilon) \\
\EE[\log H_{N(t+1)}-\log H_{Nt}|\cF_{Nt}] &\geq -N (C_1 + \epsilon) \qquad \text{if } H_{Nt}<\epsilon \\
|\log H_{N(t+1)}-\log H_{Nt}| &<  NC_2 \qquad\qquad \text{if } H_{Nt}<\epsilon.
\end{align}
\end{subequations}
Define $M_{t'}=Z_{N t'}$, where $Z_t$ is defined in~\eqref{def:newsubmartingale}, and filtration $\cF'_{t'} = \sigma(Y^{Nt'},V^{Nt'},S_1)$. Then $(M_t')_{t'\geq 0}$ is a submartingale w.r.t. $(\cF'_{t'})_{t'\geq 0}$ by Lemma~\ref{lemma:newsubmartingale}.
Notice that the quantity $t'$ here indicates the order of the block of $N$ consecutive transmissions.
Furthermore, define the stopping time $\hat{T}$ w.r.t. $(\cF'_{t'})_{t'\geq 0}$ by $\hat{T} = \min\{k| T \leq Nk\}$. By definition of $\hat{T}$, we have
\begin{equation} \label{ineq:twostoppingtime}
(\hat{T}-1)N \leq T \qquad a.s.
\end{equation}
Now we essentially apply the optional sampling theorem on the submartingale  $(M_t')_{t'\geq 0}$ as follows
%\begin{align}
%\frac{K-\epsilon}{N(C+\epsilon)}& =M_0 \nonumber \\
% &\leq \EE[M_{\hat{T}}]\nonumber \\
% &= \EE[\frac{\log H_{N\hat{T}}-\log \epsilon}{N (C_1 + \epsilon) }] + \EE[\hat{T}] + \EE[f(\frac{\log H_{N\hat{T}}}{\log \epsilon})]\nonumber \\
% & \overset{(a)}{=}  \EE[\frac{\log H_{T}-\log \epsilon}{N (C_1 + \epsilon) }]+ \EE[\hat{T}] + \EE[f(\frac{\log H_{N\hat{T}}}{\log \epsilon})]\nonumber \\
% &\overset{(b)}{\leq}  \frac{\log \EE[H_{T}]-\log \epsilon}{N (C_1 + \epsilon) } + \EE[\hat{T}] + \EE[f(\frac{\log H_{N\hat{T}}}{\log \epsilon})]\nonumber \\
% &\overset{(c)}{\leq}  \frac{\log \EE[H_{T}]-\log \epsilon}{N (C_1 + \epsilon) } + \frac{\EE[T]}{N}+1 + \EE[f(\frac{\log H_{N\hat{T}}}{\log \epsilon})]\nonumber \\
% &\overset{(d)}{\leq}  \frac{\log \EE[H_{T}]-\log \epsilon}{N (C_1 + \epsilon) } + \frac{\EE[T]}{N}+1 + \frac{1}{\lambda N C_1} \nonumber \\
% &\overset{(e)}{\leq}  \frac{\log P_e + \log(K-\log P_e)-\log \epsilon }{N (C_1 + \epsilon) } + \frac{\EE[T]}{N}+1 + \frac{1}{\lambda NC_1},
%\end{align}
\begin{align}
\frac{K-\epsilon}{N(C+\epsilon)}& =M_0 \nonumber \\
 &\leq \EE[M_{\hat{T}}]\nonumber \\
 &= \EE[(\frac{\log H_{N\hat{T}}-\log \epsilon}{N (C_1 + \epsilon) } +f(\frac{\log H_{N\hat{T}}}{\log \epsilon}))1_{H_{N\hat{T}}\leq\epsilon}] + \EE[( \frac{H_{N\hat{T}}-\epsilon}{N(C+\epsilon)} )1_{H_{N\hat{T}}>\epsilon}] + \EE[\hat{T}] \nonumber \\
 &\leq \EE[\frac{\log H_{N\hat{T}}+|\log \epsilon|}{N (C_1 + \epsilon) } +f(\frac{\log H_{N\hat{T}}}{\log \epsilon})] + \EE[ \frac{H_{N\hat{T}}+\epsilon}{N(C+\epsilon)} ] + \EE[\hat{T}] \nonumber \\
 &\overset{(a)}{=} \EE[\frac{\log H_{T}+|\log \epsilon|}{N (C_1 + \epsilon) } +f(\frac{\log H_{T}}{\log \epsilon})] + \EE[ \frac{H_{T}+\epsilon}{N(C+\epsilon)} ] + \EE[\hat{T}]  \nonumber \\
 &\overset{(b)}{\leq} \frac{\log \EE[H_{T}]+|\log \epsilon|}{N (C_1 + \epsilon) } +\frac{1}{\lambda NC_1} +  \frac{\EE[H_{T}]+\epsilon}{N(C+\epsilon)}  + \EE[\hat{T}] \nonumber \\
 &\overset{(c)}{\leq} \frac{\log \EE[H_{T}]+|\log \epsilon|}{N (C_1 + \epsilon) } +\frac{1}{\lambda NC_1} +  \frac{\EE[H_{T}]+\epsilon}{N(C+\epsilon)}  + \frac{\EE[T]}{N}+1 \nonumber \\
 &\overset{(d)}{\leq}  \frac{\log(P_e(K-\log P_e)-\log(1-P_e) )+|\log \epsilon|}{N (C_1 + \epsilon) }  + \frac{1}{\lambda NC_1} + \frac{P_e(K-\log P_e)-\log(1-P_e) +\epsilon}{N(C+\epsilon)}+\frac{\EE[T]}{N}+1  \nonumber \\
 &\leq  \frac{\log P_e + \log(K-\log P_e)+\Delta+|\log \epsilon | }{N (C_1 + \epsilon) } +  \frac{1}{\lambda NC_1}+ \frac{P_e(K-\log P_e)-\log(1-P_e) +\epsilon}{N(C+\epsilon)} +\frac{\EE[T]}{N}+1,
\end{align}
where (a) is due to that the receiver no longer performs actions after time $T$, (b) is due the the concavity of $\log(\cdot)$ and that $f$ is upper-bounded by $\frac{1}{\lambda NC_1}$, (c) is due to \eqref{ineq:twostoppingtime}, (d) is due to the Fano's lemma and $\Delta=\log(1-\frac{\log(1-P_e)}{P_e(K-\log P_e)})$. Multiplying $N$ on the both sides the above inequality and rearranging terms, we get
\begin{align}
-\frac{\log P_e}{\EE[T]} & \leq C_1(1-\frac{\overline{R}}{C})
+ \frac{ \log(K-\log P_e)+ \Delta + |\log \epsilon|}{K/\overline{R}}  \nonumber \\
& +\frac{C_1+\epsilon}{K/\overline{R}}( \frac{1}{\lambda C_1}+\frac{-P_e \log P_e -\log(1-P_e)+2\epsilon}{C+\epsilon}+N) \nonumber \\
& + \frac{\overline{R}(C_1+\epsilon)P_e}{C+\epsilon}  + \epsilon(1-\frac{1+C_1/C}{C+\epsilon}\overline{R}). \label{ineq:result}
\end{align}
Taking the limit $K\rightarrow \infty$ on the error term on the RHS results in
$\frac{\overline{R}(C_1+\epsilon)P_e}{C+\epsilon}  + \epsilon(1-\frac{1+C_1/C}{C+\epsilon}\overline{R})$
and after taking the limit $P_e \rightarrow 0$ we get  $\epsilon(1-\frac{1+C_1/C}{C+\epsilon}\overline{R})=o_{\epsilon}(1)$.
%\end{IEEEproof}

\bibliographystyle{IEEEtran}
%\bibliography{IEEEabrv,achilleas16abrv,achilleas16_own,achilleas16,jui15}

\end{document}